\documentclass[onecolumn]{IEEEtran}
\IEEEoverridecommandlockouts
\usepackage{cite}
\usepackage{amssymb,amsfonts}
\usepackage{algorithmic}
\usepackage{graphicx}
\usepackage{textcomp}
\usepackage[tbtags]{amsmath}
\usepackage{diagbox}
\usepackage{multirow}
\usepackage{amsthm} 
\usepackage{algorithm}
\newtheorem{theorem}{Theorem}

\theoremstyle{definition}
\usepackage{comment}
\usepackage{booktabs,color,amssymb,amsmath}

\def\BibTeX{{\rm B\kern-.05em{\sc i\kern-.025em b}\kern-.08em
    T\kern-.1667em\lower.7ex\hbox{E}\kern-.125emX}}

\newtheorem{Definition}{Definition}
\newtheorem{Example}{Example}
    
\begin{document}

\title{Reduce Transmission Delay for Cache-Aided Relay Networks \\
}

\author{
  \IEEEauthorblockN{Ke Wang\IEEEauthorrefmark{1}\IEEEauthorrefmark{2}\IEEEauthorrefmark{3}, Youlong Wu\IEEEauthorrefmark{1}, Shujie Cao\IEEEauthorrefmark{1}, Jiahui Chen\IEEEauthorrefmark{1}\IEEEauthorrefmark{2}\IEEEauthorrefmark{3}   \\}
\IEEEauthorblockA{\IEEEauthorrefmark{1}School of Information Science and Technology,
	ShanghaiTech University, Shanghai 201210, China\\}
\IEEEauthorblockA{\IEEEauthorrefmark{2} Shanghai Institute of Microsystem and Information Technology,
	Chinese Academy of Sciences\\}
\IEEEauthorblockA{\IEEEauthorrefmark{3} University of Chinese Academy of Sciences, Beijing 100049, China
  	\{wangke,wuyl1,caoshj,chenjh1\}@shanghaitech.edu.cn}
  \thanks{This paper was in part presented at the \emph{IEEE International Symposium on Information Theory (ISIT)},  Paris, France, July, 2019, and   in part submitted to ISIT,  L. A., California, US, 2020.

Shujie Cao, Youlong Wu, Jiahui Chen, Ke Wang and Haoyu Tu are with the 
School of Information Science and Technology, ShanghaiTech University, 201210 Shanghai, China. (e-mail:  \{caoshj,wuyl1,chenjh1,wangke,tuhy\}@shanghaitech.edu.cn).
}
}


\maketitle

\begin{abstract}
 
    In this paper, we consider a cache-aided relay network, where  a single server consisting of a library of $N$ files connects with $K_{1}$ relays through a shared noiseless link, and each relay  connects with $K_{2}$ users through a shared noiseless link. Each relay and user are equipped with a cache memory of $M_1$ and $M_2$ files, respectively.  We  propose a centralized  and  a decentralized coded caching scheme that  exploit the spared transmission time resource by allowing  concurrent transmission between the two layers. It is shown that both caching schemes are approximately optimal, and  greatly reduce the transmission delay compared to  the previously known caching schemes. Surprisingly, we show that when the relay's caching size is equal to a threshold that  is strictly smaller than $N$ (e.g. $M_1=0.382N$ under the decentralized setup and $(K_1-1)N/{K_1}$ under the centralized setup, when $K_1=2$), our schemes achieve the same delay as if  each relay had access to the full library. To our best knowledge, this is the first result showing that even the  caching size is strictly smaller than the library's size, increasing the caching size  is wasteful in reducing the transmission latency. 
                         
\end{abstract}

\begin{IEEEkeywords}
Caching, relay network, delay
\end{IEEEkeywords}

\section{Introduction}
Caching is considered as a promising technique to release the traffic load on the Internet during network peak hours. A representative approach is to take advantage of the cache memories of end nodes or other terminals to store some contents in advance. Thus, only the contents that are not cached in local caches need to be delivered, resulting in a reduction on communication load. The whole procedure in the caching system is divided into two phases: the placement phase, where each user prefetches some contents to fill its local cache, and the delivery phase, where users inform their demands to the server and the server delivers the contents needed by the users according to the information cached by users. To further reduce the traffic load and improve the transmission efficiency, Maddah-Ali and Niesen proposed \emph{coded caching }  which obtains a global caching gain by creating multicasting opportunities for multiple users in \cite{Centralized,Decentralized}. 


Caching problem on different relay networks was considered in  \cite{Karamchandani'16,Zewail'17,Tao'ISIT18,Sengupta'17}, in which a server  communicates with multiple users with the help of multiple relays. In particular, the work in \cite{Karamchandani'16} considered a noiseless network where one server connects with multiple relays with each relay serving a distinct set of users. For this network, the authors proposed a hierarchical coded caching (HCC) scheme which achieves the optimal communication rates  within a constant multiplicative and additive gap. In \cite{Zewail'17} it investigated a network where the server connects with each relay via individual link, and each user is connected to a distinct set of relay nodes. A more general network where each relays connects with all users through wireless channel is considered in \cite{Tao'ISIT18,Sengupta'17}. 

Some other aspects of coded caching have been investigated in the literature. In  \cite{fileuser}, it studied the case of different users requesting the same file, where more users than files in centralized coded caching. In \cite{physical}, a coded caching scheme achieving  both spatial multiplexing and buffer gain by using coded delivery and zero-forcing was proposed. Coded caching with multiple transmit antennas was studied \cite{addtransmitters}. A special structure called \textit{placement delivery array (PDA)} was proposed to describe placement and delivery phase in coded caching schemes in a simpler manner with reduced subpacketization \cite{PDA1}. Hypergraphs and bipartite graphs are employed to describe coded caching and present the schemes with subpacketization
subexponential in $K$ \cite{hypergraph, PDA2}. In \cite{withoutfilesplit}, keeping the files intact during coded caching was proposed for simpler implementation, also reducing the delivery rate. Information security and private information retrieval are also introduced to coded caching to expand the radiation field of the technique \cite{private1, private2, private3}. Other work on coded caching  include, e.g. cache system with  heterogeneous problem settings \cite{DffDemand,HeterogeneousCacheSizes,cHeterogeneous'17,cHeterogeneous'19,HeterogeneousOptimization,cachecapacity}, cache-aided noiseless multi-server network \cite{multi-server}, cache-aided device-to-device network \cite{D2D}, cache-aided interference management \cite{intermanage'17}, \cite{b11}, coded caching with distinct sizes of fies \cite{filesize}, coded-caching with random demands \cite{randomdemand}, coded caching based on combinatorial designs \cite{combinationdesign}, combining with distributed computing for a tradeoff between computation and communication \cite{distributedcomputing}, caching in combination networks \cite{combinationnetwork}, etc.

In this paper, we revisit the  relay  network considered in \cite{Karamchandani'16}. More specifically, we study a  two-layer network where   a single server consisting of a library of $N$ files connects with multiple relays,   each  equipped with a cache memory of $M_1$ files, via a shared noiseless link, and each relay connects with a distinct set of users, each equipped with a cache memory of  $M_2$ files, via a shared noiseless link.  Since the server and relays operate in two separate layers, we assume that the relays can send signal during the server's transmission.
The main  contributions of this paper are summarized as follows.

\begin{itemize}{ 
	\item We   show that a simple pipelined forward scheme  outperforms  HCC scheme by exploiting  the opportunity of  concurrent transmission at the server and relays.   As we will show later,  an  intrinsic property of scheme HCC, excludes it from utilizing this opportunity. We combine   the original HCC scheme with pipeline-forward and give an approximately optimal choice of system splitting parameters. 
	
	\item We propose novel centralized and  decentralized caching schemes that fully exploit the spared  time resource by  letting the server and the relays seamlessly send data, i.e., relays are allowed to send and receive data simultaneously.  The schemes are approximately optimal, and  greatly reduce the transmission delay compared to  the HCC schemes.   Instead of  regarding these two layers separately in HCC scheme,  we jointly design the file placement and delivery for them, which makes our work  not an easy extension of Maddah-Ali and Niesen's schemes \cite{Centralized, Decentralized} to the two-layer network.
	

	\item  Surprisingly, we show that when each relay's caching size is equal to a threshold that is strictly smaller than $N$, increasing caching size at relay nodes will not reduce  the transmission latency. More specifically, under the the centralized caching placement as \cite{Centralized}, if $M_1$ equals to the threshold $(K_1-1)N/{K_1}$,  our scheme can achieve  the optimal transmission delay as if all relays had access to the full library; under the decentralized caching placement as \cite{Decentralized}, if $K_1=2,M_1=0.38N$, then  our scheme can again achieve  the optimal transmission delay as if all relays had access to the full library.

	}
\end{itemize}

The rest of the paper is organized as follows. Section \ref{eq:Model} introduces the system model considered in this paper. Section \ref{Sec_Preliminary} reviews the related work. Section \ref{Sec_Motivation} gives  motivations and some simple examples of our new schemes. Section \ref{sec_results} presents our main results.  The proposed centralized and decentralized schemes are described in Section \ref{sec_proof_centralized} and Section \ref{sec_proof_decentralized}, respectively. Section \ref{sec_conclusion} concludes the paper.

\section{Problem Definition} \label{eq:Model}
Consider a two-layer  delivery network in Fig. \ref{fig_model}, which includes a single server, $K_1$ relays and $K_1K_2$ users. 
 The server has a library of $N$ independent files $W_1,\ldots,W_N$. Each $W_n$, $n=1,\ldots,N$, is  is uniformly distributed over \[[2^F]\triangleq \{1,\ldots,2^F\},\] for some positive integer $F$. Every relay node  has   a cache memory of size  $M_1F$ bits, $M_1\in[0,N]$, and is connected to the server  through a noiseless shared link. Meanwhile, each relay  connects with $K_2$ users, each equipped with a  cache memory of size   $M_2F$ bits,  for $M_2\in[0,N]$, through a noiseless shared link. 
Let the $j$-th user attached to relay $i$ be $u^i_j$, for $i\in[K_1]$, and $j\in[K_2]$ and define   
\begin{IEEEeqnarray}{rCl}
&& \mathcal{U}_{i}\triangleq  \{u^i_1,\ldots, u^i_{K_2}\},\\
 && \mathcal{U}\triangleq  \mathcal{U}_{1}\cup\cdots\cup\mathcal{U}_{K_1}
\end{IEEEeqnarray}
where  $\mathcal{U}_i$  and $\mathcal{U}$ denote the set of the users' indices with respect to relay $i$ and the set of \emph{all} users' indices, respectively.

\begin{figure}
           \centering
	\includegraphics[width=0.5\textwidth]{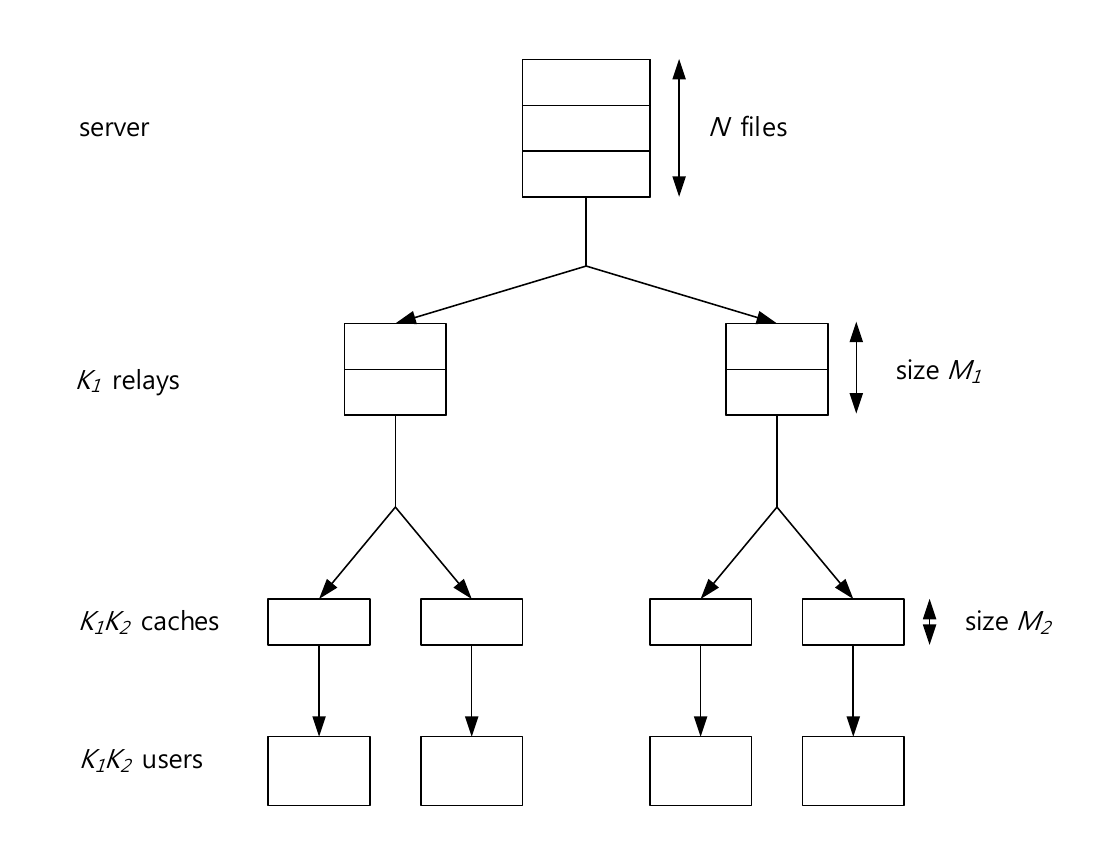}
	\caption{A two-layer network. A server containing a library of $N$ files connects with $K_1$ relays, and each relay  connects with $K_2$ users, both through shared noiseless links. The cache sizes of each relay and user are $M_1$ and $M_2$, respectively. In this figure, $N=4$, $K_1=K_2=2$, $M_1=2$ and $M_2=1$.}
	\label{fig_model}
\end{figure}

Each user  requests one of the $N$ files from the library. We denote the demand of  user $u^i_j$ by 
$d_j^{~\!\!i}\in[N]$,
and the corresponding file  by  $W_{d_j^{~\!\!i}}$. Let 
\[\mathbf{d}\triangleq (d_j^{~\!\!i},\ldots,d^{K_1}_{K_2})\] denote the users' request vector.

The system operates in two phases: a placement phase and a delivery phase. 
In the placement phase, all relays and users have access to the entire library $W_1,\ldots,W_N$ and fill the content to their caches. More specifically,  each relay $i$ maps the files $W_1,\ldots,W_N$ to the cache content: 
\begin{IEEEeqnarray}{rCl}
&&Z_{i}\triangleq \phi_{i}(W_1,\ldots,W_N),
\end{IEEEeqnarray}
and user $u^i_j$ maps  $W_1,\ldots,W_N$ to the cache content: 
\begin{IEEEeqnarray}{rCl}
Z^i_j\triangleq \phi^i_j(W_1,\ldots,W_N),
\end{IEEEeqnarray}
for some caching functions
 \begin{subequations}\label{eq:caching}
\begin{IEEEeqnarray}{rCl}
&&\phi_{i}: [2^F]^N\rightarrow [\lfloor2^{FM_1}\rfloor],~ \phi^i_j:[2^F]^N \rightarrow [\lfloor2^{FM_2}\rfloor].
\end{IEEEeqnarray}
\end{subequations}

 In the delivery phase, the server and relays are informed with the users' requests $\bf{d}$, and  send symbols to the network such that  all the users' requests are satisfied. More specifically, the server produces symbol \[X_{\bf{d}} \triangleq  f_{\bf{d}}(W_1,\ldots,W_N),\]
  and relay $i$ perfectly receives the symbols sent by the server, and produces symbol \[X_{i,{\bf{d}}}\triangleq  f_{i,\bf{d}}(Z_i,X),\]
 for some encoding functions
 \begin{subequations}\label{eq:encoding}
\begin{IEEEeqnarray}{rCl}
&&f_{\bf{d}}:[2^F]^N\rightarrow [\lfloor2^{FR_1}\rfloor],\\
&&f_{i,\bf{d}}:[\lfloor2^{FM_1}\rfloor]\times [\lfloor2^{FR_1}\rfloor] \rightarrow  [\lfloor2^{FR_2}\rfloor]
\end{IEEEeqnarray}
\end{subequations}
 where $R_1$ and $R_2$ denote the rate   transmitted in the first layer and second layer, respectively.  
  
Each  user $u^i_j\in\mathcal{U}$ perfectly observes the symbol sent by relay $i$, and decodes its desired message as
  \[\hat{W}_{d_j^{~\!\!i}}=\psi^i_{j,\bf{d}}(X_i,Z^i_j),\]
 for some decoding function
 \begin{IEEEeqnarray}{rCl}\label{eq:decoding}
 \psi^i_{j,\bf{d}}:   [\lfloor2^{FR_2}\rfloor]\times   [\lfloor2^{FM_2}\rfloor]\rightarrow [2^F].
\end{IEEEeqnarray}
 We define the worst-case probability of error as
\begin{IEEEeqnarray}{rCl}
P_e \triangleq  \max_{{\bf{d}}\in [2^F]^N} \max_{i\in[K_1], j\in[K_2]} \text{Pr} \left(\hat{W}_{d_j^{~\!\!i}} \neq {W}_{d_j^{~\!\!i}}\right).
\end{IEEEeqnarray}

A caching scheme $(M_1,M_2,R_1,R_2)$ consists of caching functions \eqref{eq:caching}, encoding functions \eqref{eq:encoding} and decoding functions \eqref{eq:decoding}. We say that a rate region $(M_1,M_2,R_1,R_2)$  is \emph{achievable} if for every $\epsilon>0$ and every large enough file size $F$, there exists a caching scheme such that $P_e$    is less than $\epsilon$.

%
%
%
%
\begin{Definition}
	Consider a $K$-node cache-aided network and an achievable caching scheme that delivers data in $L\in \mathbb{Z}^+$ slots. Denote the transmission rate sent by node $k\in[K]$ in slot $\ell\in[L]$ as $R_{k,\ell}$.  The transmission delay of slot   $\ell$ is defined as the maximum transmission rate of all nodes in this slot, i.e., \begin{IEEEeqnarray}{rCl}
		{T}_\ell \triangleq \max_{k\in[K]}R_{k,l}.
	\end{IEEEeqnarray}
	The transmission delay of the system is defined as the total transmission delay in $L$ slots, i.e.,
	$
	T \triangleq  \sum_{\ell\in[L]}  {T}_\ell. 
	$ The \emph{optimal}  transmission delay is minimum transmission delay of all achievable caching scheme, i.e., 
	\begin{IEEEeqnarray}{rCl}
		T^*\triangleq \inf\{ T\}.
	\end{IEEEeqnarray}
\end{Definition}
\section{Preliminary: Hierarchical Coded Caching}\label{Sec_Preliminary}
Define $[x]^+\triangleq  \mathsf{max}\{x,0\}$, and 
\begin{IEEEeqnarray}{rCl}
r_\mathsf{d}\Big(\frac{M}{N},K\Big)\triangleq  \left[K\Big(1-\frac{M}{N}\Big)\frac{N}{KM}\Big(1-\Big(1-\frac{M}{N}\Big)^K\Big)\right]^+.\nonumber\\
\end{IEEEeqnarray}


In \cite{Karamchandani'16} the authors considered the  two-layer network as described in Section \ref{eq:Model}. They propose three hierarchical caching schemes, HCC-I, HCC-II and  HCC-III, based on the single-layer  decentralized caching scheme \cite{Decentralized}. We recall these three schemes for future reference and comparison as follows.

\begin{itemize}
\item Scheme HCC-I:  The main idea is that  each relay is considered as a ``tycoon user" that wishes to cover all the files requested by its attached users,  the server firstly  delivers  the requested files to all relays, and after the relays decoding all their required files, they concurrently send the requested files to their attached users. 
It is easy to obtain that to ensure all relays perfectly know $\{W_{d^{~\!\!i}_1},\ldots,W_{d^{~\!\!i}_{K_2}}\}_{i=1}^{K_1}$, the rate satisfies
\begin{subequations}\label{eq_HccARate}
\begin{equation}
R_{\mathsf{Hcc},1}^{\text{\tiny I}}\triangleq K_2\cdot r_\mathsf{d}\left({M_1}/{N},K_1\right). 
\end{equation}

In the second subphase, after the relays successfully decode all the requested files of  their attached users through the first subphase,  they parallelly  deliver   the requested files to their attached users using the single-layer decentralized  caching scheme, which leads to an achievable rate
  \begin{equation}
R^{\text{\tiny I}}_{\mathsf{Hcc},2} \triangleq r_\mathsf{d}\left({M_2}/{N},K_2\right). 
\end{equation}
\end{subequations}

\item  Scheme HCC-II: In scheme HCC-II the caching memories of the relays are completely ignored, and the relays  forward relevant parts of the server transmissions to the corresponding users. 
The rate of the first layer is
\begin{subequations}\label{eq_HccBRate}
\begin{equation}
R^{\text{\tiny II}}_{\mathsf{Hcc},1}\triangleq r_\mathsf{d}\left({M_2}/{N},K_1K_2\right),
\end{equation}
and the rate of the second layer is
\begin{equation}
R_{\mathsf{Hcc},2}^{\text{\tiny II}}\triangleq r_\mathsf{d}\left({M_2}/{N},K_2\right). 
\end{equation}
\end{subequations}

\item  Scheme  HCC-III: Informally,  HCC-III is a mixture of scheme HCC-I and scheme HCC-II. The system is divided into two subsystems with two fixed parameters $\alpha,\beta\in [0,1]$. The first subsystem includes the entire cache memory of each relay, an $\alpha$ fraction of each file in the library and a $\beta$ fraction of cache memory for each user, and the second subsystem includes the remaining $1-\alpha$ fraction of each file in the server and a $1-\beta$ fraction of each user's cache memory. Obviously, scheme HCC-I and HCC-II can be implemented in the first subsystem and  second subsystem, respectively. Thus, we have the rate of the first layer: 
\begin{subequations}\label{eq_HccCRate}
\begin{equation}
\begin{aligned}
R^{\text{\tiny III}}_{\mathsf{Hcc},1}\triangleq  \ &\alpha K_2 r_\mathsf{d}\Big(\frac{M_1}{\alpha N},K_1\Big)+(1-\alpha)\\&\cdot r_\mathsf{d}\Big(\frac{(1-\beta)M_2}{(1-\alpha)N},K_1K_2\Big),
\end{aligned}
\end{equation}
and the rate of the second layer:
\begin{equation}\label{eq_rateHRate2}
\begin{aligned}
R^{\text{\tiny III}}_{\mathsf{Hcc},2}\triangleq \ & \alpha r_\mathsf{d}\Big(\frac{\beta M_2}{\alpha N},K_1\Big)+(1-\alpha)\\&\cdot r_\mathsf{d}\Big(\frac{(1-\beta)M_2}{(1-\alpha)N},K_2\Big).
\end{aligned}
\end{equation}
The approximately optimal $\alpha$ and $\beta$ is given as
\begin{equation}\label{eqHCCab} 
	(\alpha,\beta)\!=\!\left\{
	\begin{aligned}
	&\!\Big(\frac{M_1}{N},\frac{M_1}{N}\Big), ~M_1\!+\!M_2K_2\!\ge \!N , 0\!\le \!M_1\!\le \!\frac{N}{4},\quad\\
	&\!\Big(\frac{M_1}{M_1+M_2K_2},0\Big), M_1+M_2K_2\!<\!N,\\
	&\!\Big(\frac{M_1}{N},\frac{1}{4}\Big),~ M_1\!+\!M_2K_2\!\ge \!N,\frac{N}{4}\!<\!M_1\!\le\! N. \qquad 
	\end{aligned}
	\right.
\end{equation}
\end{subequations}
\end{itemize}
\begin{figure}
	\centering
	\includegraphics[width=0.45\textwidth,scale=0.2]{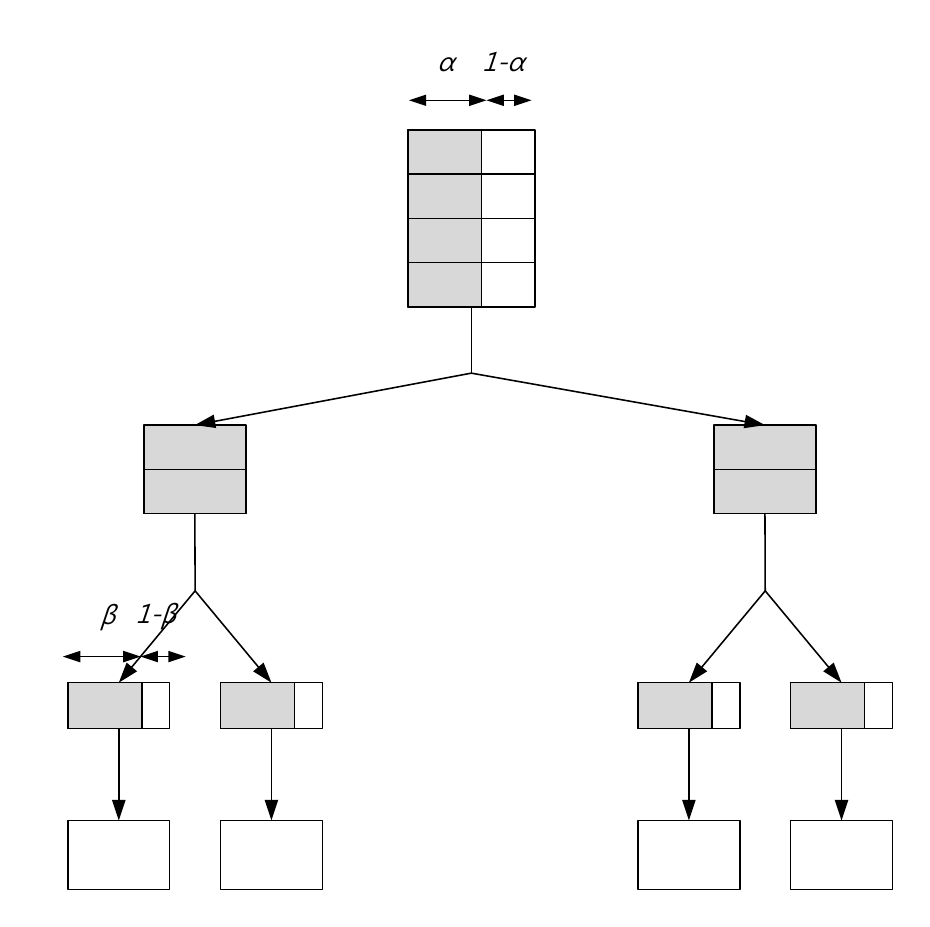}
	\caption{A two-layer caching system with two subsystems. For given $\alpha$ and $\beta$, the system is divided into two independent subsystems. Proposed caching scheme is applied to the first subsystem, and pipeline-forward  scheme is applied to the second subsystem. In this figure, $N=4$ and $K_1=K_2=2$.}\label{fig_hierarchical}
\end{figure}

Scheme  HCC-III reduces to scheme HCC-I when $\alpha=\beta=1$, and to scheme HCC-II when $\alpha=\beta=0$. Since  the transmission in two layers proceeds in a sequential progress, the transmission delay of scheme  HCC-III is 
\begin{IEEEeqnarray}{rCl}\label{eq_HccCRate}
T_{\mathsf{Hcc,D}}^{\text{original}}=R^{\text{\tiny III}}_{\mathsf{Hcc},1}+R^{\text{\tiny III}}_{\mathsf{Hcc},2}.
\end{IEEEeqnarray}
Here we use subscript $\mathsf{D}$ to represent the scheme that is based on decentralized coded caching scheme \cite{Decentralized}. 
In \cite{Karamchandani'16} it shows that  $R^{\text{\tiny III}}_{\mathsf{Hcc},1}$ and $R^{\text{\tiny III}}_{\mathsf{Hcc},2}$ achieve the optimal  rates within a constant multiplicative and additive gap. 

The idea of scheme HCC-I  can be easily extended  the setup under the centralized caching placement \cite{Centralized}. Define \begin{IEEEeqnarray}{rCl}
r_\mathsf{c}(\frac{M}{N},K) \triangleq \left[{K\left(1-\frac{M}{N}\right)}\frac{1}{1+{K M}/{N}}\right]^{+}.
\end{IEEEeqnarray} By combining the centralized HCC-I and HCC-II, we achieve the  transmission delay having same  expression as \eqref{eq_HccCRate},  but with  $r_\mathsf{d}(\cdot)$  in $(R^{\text{\tiny III}}_{\mathsf{Hcc},1}, R^{\text{\tiny III}}_{\mathsf{Hcc},2})$ replaced by $r_\mathsf{c}(\cdot)$, and $M_i\in\{0,{N}/{K_i},{2N}/{K_i},\ldots,N\}$, $i=1,2$. For general $0 \leq M \leq N$, the lower convex envelope of these points is achievable.


\section{Motivations and Examples}\label{Sec_Motivation}
We first propose a simple scheme, called \emph{pipeline forward}, to replace of the original HCC-II. New upper bounds on transmission delay are derived based on the  HCC-I and pipeline-forward scheme. Finally, we use  simple  examples to  illustrate that transmission delay can be greatly reduced by allowing concurrent transmission between the server and relays, which motivates our coded caching schemes. 
\subsection{Pipeline-Forward Scheme}

 In the \emph{pipeline-forward} scheme, each relay connects the two layers as one pipeline (the information flow moves smoothly through the pipeline), then the two-layer network  is equivalent to the single-layer  network in which a server connects with $K_1K_2$ users through a shared noiseless link.  In the following example, we will show that this simple pipeline-forward can even outperform scheme HCC-III. 
\begin{Example}[Pipeline forward] \label{eg_pipe}
Consider the two-layer network with a single relay ($K_1=1$) with $M_1=0,M_2<N$.   From \eqref{eq_HccARate}, we have $R_{\mathsf{Hcc},1}^{\text{\tiny I}}=K_1K_2$ and  $R_{\mathsf{Hcc},2}^{\text{\tiny I}}=r_\mathsf{d}(M_2/N,K_2)$. From \eqref{eq_HccBRate}, we have $R_{\mathsf{Hcc},1}^{\text{\tiny II}}=r_\mathsf{d}\left({M_2}/{N},K_1K_2\right)$ and $R_{\mathsf{Hcc},2}^{\text{\tiny II}}=r_\mathsf{d}(M_2/N,K_2)$. By \eqref{eq_HccCRate} we have $T_{\mathsf{Hcc,D}}^{\text{original}}> r_\mathsf{d}(M_2/N,K_1K_2)$. In fact, if we use the {pipelined-forward} scheme, then the server connects directly with $K_1K_2$ users, from \cite{Decentralized} the transmission delay is $r_\mathsf{d}(M_2/N,K_1K_2)$, which is always  smaller than  $T_{\mathsf{Hcc,D}}^{\mathsf{Original}}$.  The reason why the simple pipelined forward scheme outperforms  HCC scheme is that in scheme HCC, in particular HCC-I, each relay has to wait until it receives enough symbols (sent by the server) to  recover the requested files of all its attached users,  resulting in redundant transmission delay, while  the pipelined forward scheme saves the time by letting the relays and server consistently and concurrently deliver data. 
\end{Example}

\subsection{Modified HCC}\label{HCC centralized}
Combining the decentralized HCC-I with the  pipeline-forward scheme in a way same as HCC-III,  the upper bound of the transmission delay, denoted by  $T_\mathsf{Hcc,D}$, is
\begin{IEEEeqnarray}{rCl}\label{modifed_HCC}
	T_\mathsf{Hcc,D}\triangleq \alpha K_{2} r_\mathsf{d}\left(\frac{M_{1}}{\alpha N}, K_{1}\right)+\alpha r_\mathsf{d}\left(\frac{\beta M_{2}}{\alpha N}, K_{2}\right) \nonumber \\
	+(1-\alpha)r_\mathsf{d}\Big(\frac{(1-\beta)M_2}{(1-\alpha)N},K_1K_2\Big).
\end{IEEEeqnarray}
 Applying centralized caching strategy to  HCC-I and combine it  with the  pipeline-forward scheme,    the corresponding upper bound of the transmission delay, denoted by  $T_\mathsf{Hcc,C}$, has the same  expression as \eqref{modifed_HCC}, but with 
$r_\mathsf{d}(\cdot)$ replaced by $r_\mathsf{c}(\cdot)$, and  $M_i\in\{0,{N}/{K_i},{2N}/{K_i},\ldots,N\}$, $i=1,2$. For general $0 \leq M \leq N$, the lower convex envelope of these points is achievable.

Here the approximately optimal $(\alpha, \beta)$ for the modified HCC scheme   is given as 
\begin{IEEEeqnarray}{rCl}\label{choiceHCC}
	(\alpha_\mathsf{Hcc}^*, \beta_\mathsf{Hcc}^*)=\left\{\begin{array}{ll}{\left(\frac{M_{1}}{N,} \frac{M_{1}}{N}\right),} & {M_{1}+M_{2} K_{2} \geqslant N_{1}}, \\ {(\frac{M_{1}}{M_{1}+M_{2}K_{2}},0)}. & {M_{1}+M_{2} K_{2} < N_{1}},\end{array}\right.\quad
\end{IEEEeqnarray}
The selection of $\alpha$ and $\beta$  is very similar to the orignial HCC scheme, except that we merge the first and the third regime in \eqref{eqHCCab}. In  Appendix \ref{pr_ab}, we prove that  when  $K_{2} M_{2}/N$ is sufficiently large, $\alpha=\beta=M_1/N$ is optimal for $T_\mathsf{Hcc,C}$.

\subsection{Concurrent Transmission}
In scheme HCC-I,  each  relay  only starts broadcasting data   after it successfully decoding the   files requested by its attached users. In fact, each relay's cache memory may contain some pieces of files requested by its attached users, and thus we can save the transmission duration by letting the relay send these pieces of files to the users during the server's transmission.  Furthermore, scheme HCC-I  takes the relays as  ``tycoon users" and in the first layer the sever send files only with the help of relays' caches, completely ignoring the users' caches. This means that the server may send some redundant contents that have already been stored in users' cache memories.   Now consider the following examples.


\begin{Example}[Optimal case with $M_1=N$] Consider the two-layer network  with $N=4$ files ($A$, $B$, $C$ and $D$), $K_1=K_2=2$, $M_1=4$ and  $M_2=0$. This implies the  relays can access the full library and the users have zero caching capability. Obviously,  the optimal transmission delay of this case is  $T^*=2$. \label{eg_1}
\end{Example}

\begin{Example}[Concurrent transmission with $M_2=0$]\label{eg_2} 
Consider the two-layer network  with $N=4$ files ($A$, $B$, $C$ and $D$), $K_1=K_2=2$, $M_1=2$ and  $M_2=0$. For ease of explanation, we consider the centralized caching placement \cite{Centralized}. More specifically, each file is divided into two parts of equal size: $A=(A_1,A_2)$, $B=(B_1,B_2)$, $C=(C_1,C_2)$ and $D=(D_1,D_2)$. Relay $i=1,2$ caches $(A_i,B_i,C_i,D_i)$. User 1, 2, 3 and 4, request file $A$, $B$, $C$ and $D$, respectively.

Table \ref{table_Hcc} describes the delivery phase of scheme HCC-I. In  the first subphase of the deliver phase, the server sends  $A_2\oplus C_1$ and $B_2\oplus D_1$ in sequence.
In the second subphase, relay 1  sends files $A$ and $B$ in sequence, and at the same time, relay 2  sends files $C$ and $D$ in sequence. Thus,  scheme HCC-I achieves $R_1=1$, $R_2 =2$ and the transmission delay $3.$

\begin{table}[tbp]
\centering
\caption{Delivery Strategies of  HCC-A and a new scheme}
\begin{tabular}{c|cccc}
\toprule
&Server&Relay\ 1&Relay\ 2&Rate\\
\hline
\multirow{4}{*}{HCC-I}&$A_2\oplus C_1$&&&$1/2$\\
&$B_2\oplus D_1$&&&$1/2$\\
&&$A$&$C$&$1$\\
&&$B$&$D$&$1$\\
\hline
\multirow{4}{*}{New scheme}&$A_2\oplus C_1$&$A_1$&$C_1$&$1/2$\\
&$B_2\oplus D_1$&$B_1$&$D_1$&$1/2$\\
&&$A_2$&$C_2$&$1/2$\\
&&$B_2$&$D_2$&$1/2$\\
\bottomrule
\end{tabular}\label{table_Hcc}
\end{table}

Now we consider a new simple scheme as shown in Table \ref{table_Hcc}. In this scheme we let the relays send contents while receiving data from server, which achieves   the transmission delay $T=2<3$. This simple scheme improves scheme HCC-I since we 
exploit the transmission slots in a more efficient way such that  the relays and server concurrently deliver data. 


Comparing Example \ref{eg_1} with Example \ref{eg_2},  it's interesting to find thats our new scheme achieves the optimal coding  delay $T^*=2$, while only requiring $M_1=N/2=2$. That is,  even the relays lack enough cache memory to access the full library, it's feasible to achieve the transmission delay as if the relays had access to the full library.  

 \end{Example}

We conclude this section by listing the following insights:
\begin{itemize}
\item Concurrent transmission between the two layers can reduce the transmission delay.
\item For some cases, such as Example \ref{eg_2}, having partial size ($M_1<N$) of the library at the relays  may achieve the same transmission delay as the case $M_1=N$. In other words, enlarging the relay's cache memory  may not always reduce the transmission delay. 
\end{itemize}

\section{Main Results}\label{sec_results}
We now present a upper bound on the transmission delay of the network depicted in Fig.  \ref{fig_model}.

\begin{theorem}[Upper Bound of the Centralized Scheme] \label{delay of central}  For  all    $\alpha, \beta \in[0,1]$, and memory size $M_i\in\{0,{N}/{K_i},{2N}/{K_i},\ldots,N\}$, $i=1,2$, the optimal transmission delay $T^*$ is upper bounded by $T _\mathsf{Pro,C}$:  
\begin{subequations}\label{eqDelayCentral}
\begin{IEEEeqnarray}{rCl}
 T _\mathsf{Pro,C}\triangleq \alpha R_\mathsf{s1,C}+(1-\alpha)R_\mathsf{s2,C},
\end{IEEEeqnarray}
where 
\begin{IEEEeqnarray}{rCl}
R_\textnormal{s1,C} &\triangleq&R_\mathsf{p1,C} +R_\mathsf{p2,C}, \nonumber\\
R_\textnormal{s2,C} &\triangleq&r_\mathsf{c}\left(\frac{(1-\beta) M_{2}}{(1-\alpha) N},K_{1} K_{2}\right),\nonumber\\
	R_\mathsf{p1,C} &\triangleq&r_\mathsf{c}\left(\frac{M_{1}}{\alpha N},K_{1}\right) r_\mathsf{c}\left(\frac{\beta M_{2}}{\alpha N},K_{2}\right),\nonumber\\
	R_\mathsf{p2,C} &\triangleq& \min \left\{\frac{M_{1}}{\alpha N}, 1\right\} r_\mathsf{c}\left(\frac{\beta M_{2}}{\alpha N},K_{2}\right).
\end{IEEEeqnarray}
For general $0 \leq M \leq N$, the lower convex envelope of these points is achievable.

\end{subequations}
\end{theorem}
\begin{proof}  In the caching placement phase, we let the relays and users prefetch data to fill in their caches independently as in \cite{Centralized}. Notice that as the caching sizes of relays and users could be different, this leads to different sizes of subfiles stored at the relays and users, and in turn poses an obstacle on multicast transmission in the delivery phase.  Moreover, even if the relays' and users' caching sizes are the same, since each relay doesn't have all files, it cannot generate XOR symbols as the server in \cite{Centralized}. (The relay can wait and decode the server's signals until it recovers all required files,  then it can generate XOR symbol, but this turns to be scheme HCC-I and fails to make use of the parallel transmission between the server and relays.) To solve this problem, we partition the files into smaller subfiles of equal size, and design an elegant delivery strategy that still achieves the global multicast gain. Later we combine this proposed scheme with pipeline-forward scheme, and obtain the coding delay in  \eqref{eqDelayCentral}. See  detailed proof in Section \ref{sec_proof_centralized}.
\end{proof}

In Appendix \ref{pr_ab}, we show that the following  choice of $(\alpha,\beta)$ is approximately optimal:
\begin{IEEEeqnarray}{rCl}\label{choiceOur}
	\alpha^* = \beta^*  =\min\{\gamma,1\}
\end{IEEEeqnarray}
where  
\begin{IEEEeqnarray}{rCl}
					\gamma &\triangleq&\frac{2 K_{1} K_{2} M_{1} M_{2}+K_{1} M+M_{1} N}{2 K_{1} K_{2} M_{2} N-2 K_{2} M_{2} N}.
					\end{IEEEeqnarray}

\begin{theorem}\label{theorem_rateDec}
For all $M_1,M_2\in[0,N]$ and $\alpha,\beta\in[0,1]$, the optimal transmission delay $T^*$ is upper bounded by  $T_\mathsf{Pro,{D}}$:
 \begin{subequations}\label{eq_proposeRate}
\begin{IEEEeqnarray}{rCl}
 T_\mathsf{Pro,{D}}\triangleq \alpha R_\mathsf{s1,D}+(1-\alpha)R_\mathsf{s2,D}
\end{IEEEeqnarray}
where  
\begin{IEEEeqnarray}{rCl}
R_\mathsf{s1,D} &\triangleq &R_\mathsf{p1,D}+R_\mathsf{p2,D}+\mathsf{max}\{R_\mathsf{p3,D}-R_\mathsf{e,D}, 0\},\nonumber\\
R_\mathsf{s2,D}&\triangleq & r_{\mathsf{d}}\Big(\frac{(1-\beta)M_2}{(1-\alpha)N},K_1K_2\Big),
\end{IEEEeqnarray}
with \begin{IEEEeqnarray}{rCl}
R_\mathsf{e,D}&\triangleq &\Big(1-\frac{M_1}{\alpha N}\Big)^{\!K_1} \!\!\Big(1-\frac{\beta M_2}{\alpha N}\Big)^{\!K_2} \! r_{\mathsf{d}}\Big(\frac{\beta M_2}{\alpha N},(K_1\!-\!1)K_2\Big),\nonumber\\
R_\mathsf{p1,D}&\triangleq&\nonumber\left[r_{\mathsf{d}}\Big(\frac{M_1}{\alpha N},K_1\right)\!-\!K_1\left(1\!-\!\frac{M_1}{ \alpha N}\Big)^{\!K_1}\right] r_{\mathsf{d}}\Big(\frac{\beta M_2}{\alpha N},K_2\Big),\\
R_\mathsf{p2,D}&\triangleq&\Big(1-\frac{M_1}{\alpha N}\Big)^{K_1} r_{\mathsf{d}}\Big(\frac{\beta M_2}{\alpha N},K_1K_2\Big),\nonumber\\
R_\mathsf{p3,D}&\triangleq&\min \left\{\frac{M_{1}}{\alpha N}, 1\right\} r_{\mathsf{d}}\Big(\frac{\beta M_2}{\alpha N},K_2\Big).
\end{IEEEeqnarray}

\end{subequations}
\end{theorem}
\begin{proof}
The decentralized placement procedure is applied to cache memories in relays and users independently. The delivery strategy is complicated since the subfiles stored in  different relays have  different impact on the transmission, and in order to fully exploit the opportunities of   multicasting and parallel transmission, we need to carefully design how to encode the subfiles and how to send them parallelly at the server and  relays.  For each user $u^i_j\in\mathcal{U}$, we divide the required  subfiles into three parts: Subfiles \uppercase\expandafter{\romannumeral 1} are  those cached by other relays except relay $i$, and will be sent by a strategy similar to the decode-forward scheme\cite{Cover'79};   Subfiles \uppercase\expandafter{\romannumeral 2} are those not cached by any relay, and will be send by  the  pipelined forward  strategy introduced in Example \ref{eg_pipe}; Subfiles \uppercase\expandafter{\romannumeral 3} are those cached by relay $i$, and will be sent by the single-layer decentralized caching scheme.  When the server sends  Subfiles \uppercase\expandafter{\romannumeral 2}, there are some parts which are redundant for relay $i$. We let  relay $i$ send parts of Subfiles III  if the server's signal is not useful. Another caching scheme is to  simply use the  pipeline forward  strategy to send all the requested files.  Combining these two caching schemes, we obtain the transmission delay in  \eqref{eq_proposeRate}. See  detailed proof in Section \ref{sec_proof_decentralized}.
\end{proof}

\begin{theorem}[Wasteful Cost at $M_1$] \label{UselessM1} When using the centralized caching placement as \cite{Centralized}, if $M_1$ equals to the threshold $(K_1-1)N/{K_1}$,  our scheme  achieves  the optimal transmission delay  same as $M_1=N$, i.e., 
\[T_\mathsf{Pro,C}=r_\mathsf{c}\left(\frac{M_{2}}{N},K_{2}\right).\]

For the two-relay case ($K_1=2$), when using the decentralized caching placement as \cite{Decentralized}, if 
  $M_1$ equals to the threshold $0.382N$,  our scheme  achieves  the optimal transmission delay  same as $M_1=N$, i.e., 
\[T_\mathsf{Pro,D}=r_\mathsf{d}\left(\frac{M_{2}}{N},K_{2}\right).\]
\end{theorem}
\begin{proof}
See proof in Appendix \ref{proof_TheoremUselssCen}.
\end{proof}

 Theorem \ref{UselessM1} indicates that increasing caching size of relay  will not always reduce the transmission delay, even when relay's caching size is strictly smaller than the size of library. To our best knowledge, this is the first result  showing that increasing caching size (for non-trivial case)  is not helpful in reducing transmission latency.  

The following theorem presents  the lower bound on the transmission delay.
\begin{theorem}\label{eq_lowerbound}
For all $M_1,M_2\in[0,N]$, $s_1\in[K_1]$ and $s,s_2\in[K_2]$,  the optimal transmission delay $T^*$ is lower bounded by
\begin{IEEEeqnarray}{rCl}\label{eq1_lowerbound}
T^*\geq  \max\left\{ s_1s_2 -\frac{s_1M_1+s_1s_2M_2}{\lfloor N/(s_1s_2) \rfloor}, s-\frac{sM_2}{\lfloor N/s \rfloor}\right\}.\quad 
\end{IEEEeqnarray}
\end{theorem}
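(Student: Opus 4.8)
The plan is to derive two separate cut-set style lower bounds on $T^*$ and then take their maximum. The overall strategy is classic information-theoretic converse reasoning: pick a carefully chosen set of user demands and a set of transmitted symbols, then argue that these symbols together with the relevant cache contents must be able to reconstruct a large number of files, so the total number of transmitted bits cannot be too small.

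For the first term $s_1 s_2 - \frac{s_1 M_1 + s_1 + s_2 M_2}{\lfloor N/(s_1 s_2)\rfloor}$, I would fix integers $s_1\in[K_1]$ and $s_2\in[K_2]$ and consider the subnetwork consisting of $s_1$ relays, each serving $s_2$ of its attached users, so that $s_1 s_2$ users are active. First I would partition the library into $\lfloor N/(s_1 s_2)\rfloor$ disjoint groups, each of size $s_1 s_2$. For each group I would let the $s_1 s_2$ active users request the $s_1 s_2$ distinct files of that group, cycling through all groups across a sequence of demand vectors. The key step is a genie/entropy argument: the concatenation of all second-layer transmissions $X_1,\ldots,X_{s_1}$ (of total normalized length $s_1 R_2$, since $s_1$ relays transmit), together with the $s_1$ relay caches (total size $s_1 M_1$) and the $s_2$ user caches seen at one representative relay (contributing the $s_2 M_2$ term), must determine all $s_1 s_2$ files in each group. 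Summing the entropy inequality over all $\lfloor N/(s_1 s_2)\rfloor$ groups and normalizing by file size $F$ yields the bound; the delay $T^*$ dominates the transmitted rate, and the "$+s_1$" additive slack accounts for the relay-to-server symbol overhead. The main obstacle here is bookkeeping the cache contributions correctly: one must ensure that each cache is counted exactly once and that the chosen demands genuinely force reconstruction of all grouped files, which requires the demands across the sequence to exhaust every file.

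For the second term $s - \frac{s M_2}{\lfloor N/s\rfloor}$, the argument is the single-layer cut-set bound of \cite{Decentralized} applied at the user layer: I would isolate $s$ users attached to a single relay, partition the library into $\lfloor N/s\rfloor$ groups of size $s$, and let these $s$ users cumulatively request every file. The symbol $X_i$ delivered by their relay (normalized length at most $T^*$) plus the $s$ user caches (total $s M_2$) must recover all files, giving $T^* \ge s - \frac{s M_2}{\lfloor N/s\rfloor}$ after summing over groups. Since any achievable scheme satisfies both bounds simultaneously, $T^*$ is at least the maximum of the two expressions, which is exactly \eqref{eq1_lowerbound}. I expect the first (two-layer) bound to be the harder part, because the two-layer structure means the genie must supply the second-layer transmissions while correctly isolating which relay caches and which user caches are admissible to include without double-counting or over-counting the available side information.
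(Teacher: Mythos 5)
Your overall strategy---two cut-set bounds, one per term, each obtained by cycling a group of users through $\lfloor N/\cdot\rfloor$ disjoint blocks of files---is exactly the route the paper takes: it invokes the cut-set bound of \cite[Appendix A]{Karamchandani'16} for the first term and, for the second term, hands the relays the full library for free so that each relay-to-users link becomes a single-layer system to which \cite[Theorem 8]{Centralized} applies. Your derivation of the second term is correct and is essentially the paper's argument.

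The gap is in your accounting for the first term. The side information you condition on---the $s_1$ second-layer transmissions, the $s_1$ relay caches, and the user caches at \emph{one} representative relay---does not determine all $s_1 s_2$ requested files of a round: the $(s_1-1)s_2$ users attached to the other relays decode using their own caches, so omitting those caches breaks the reconstruction step, while including them puts $s_1 s_2 M_2$ (not $s_2 M_2$) into the numerator. Moreover, placing the second-layer transmissions on the cut charges $s_1 R_2$ per round; since only each individual $R_2$ is bounded by $T^*$, this route yields at best $T^* \ge s_2 - (\cdots)/\bigl(s_1\lfloor N/(s_1 s_2)\rfloor\bigr)$, which is much weaker than the claimed $s_1 s_2 - (\cdots)$. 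To get $s_1 s_2$ on the right-hand side you must cut at the server's broadcast link, charging $R_1 \le T^*$ once per round and including all $s_1$ relay caches and all $s_1 s_2$ user caches on the receiver side of the cut. Finally, the ``relay-to-server symbol overhead'' you invoke to explain the additive $s_1$ does not exist in this model---the first layer is a one-way noiseless broadcast---so that term is not justified by your argument; it has to emerge from the cut-set computation itself (or be recognized as an artifact of how the bound is transcribed from \cite{Karamchandani'16}).
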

\begin{proof}
The first term on right hand of \eqref{eq1_lowerbound} follows from the similar cut-set bound given in   \cite[Appendix A]{Karamchandani'16}.  The second term is obtained by the cut-set bound  assuming $M_1=N$. In this case, the relays  access the  full library and the two-layer network is equivalent to the single-layer network where a server connects with $K_2$ users each caching $M_2$ files, and thus from \cite[Theorem 8]{Centralized}, we   obtain the second term on right hand of \eqref{eq1_lowerbound}. 
\end{proof}
Comparing the upper bounds above with the lower bound  \eqref{eq1_lowerbound}, we have
\begin{theorem}\label{theorem_compare}
For all $M_1,M_2\in[0,N]$, 
\begin{subequations}
\begin{IEEEeqnarray}{lll}\label{eq_GapHCC}
&T_{\mathsf{Hcc,D}}^{\mathsf{Original}}-T_\mathsf{Pro,D} \geq\nonumber\qquad&\\
&\quad
\left\{\begin{aligned}
	&\Big(1-\frac{M_1}{N}\Big) r_\mathsf{d}\Big(\frac{M_2}{N},K_2\Big), \hspace{13ex}\textnormal{regime~\uppercase\expandafter{\romannumeral 1}},\\
	&\frac{M_2K_2}{M_1+M_2K_2} r_\mathsf{d}\Big(\frac{M_1+M_2K_2}{NK_2},K_2\Big),  \hspace{1.5ex}\textnormal{regime~\uppercase\expandafter{\romannumeral 2}},\\
	&\Big(1-\frac{M_1}{N}\Big) r_\mathsf{d}\Big(\frac{3M_2}{4(N-M_1)},K_2\Big),  \hspace{5ex}\textnormal{regime~\uppercase\expandafter{\romannumeral 3}},  \qquad
	\end{aligned}
	\right.
\end{IEEEeqnarray}
and
\begin{IEEEeqnarray}{rCl}
 T_\mathsf{Pro,C}\leq T_\mathsf{Hcc,C} \leq c_{1}\cdot T^*, \label{eqCompareC}\\  
   T_\mathsf{Pro,D}\leq T_\mathsf{Hcc,D} \leq c_{2}\cdot T^* \label{eqCompareD}
\end{IEEEeqnarray}
where  regime 1, 2 and 3 represent $M_1+M_2K_2\ge N , 0\le M_1\le \frac{N}{4}$, $M_1+M_2K_2<N$ and $M_1+M_2K_2\ge N,\frac{N}{4}\!<\!M_1\le N$, respectively, and $c_{2}, c_{2}$ are finite positive constants independent of all the problem parameters.
\end{subequations}
\end{theorem}
\begin{proof}	
See proof in Appendix \ref{prTheo1}.
\end{proof}{
In \cite{Karamchandani'16}, the authors showed that in regime  $M_1+M_2K_2\ge N$, by using their schemes,  the rate transmitted in the first layer has a constant multiplicative gap of 35 within $T^*$. Since in their schemes, especially Scheme HCC-I, the first and second layer transmit in a sequential manner,    the gap between the their upper bounds and $T^*$ could be even larger than 35. In Appendix \ref{prTheo1}, we show that in regime  $M_1+M_2K_2\ge N$, our proposed upper bound is within a  multiplicative gap of 24. 

 As we will see in the schemes described in Section \ref{sec_proof_centralized} and \ref{sec_proof_centralized}, the main improvement of our schemes mainly comes from two facts: 1) we fully exploit the time resource by allowing the server and relays to parallelly transmit signals; 2) after  receiving the coded  package from the server, each relay doesn't need to decode every subfile contained in the XOR symbols. It can decode part of XOR symbols and send it to the attached users, by which the users can decode their desired subfiles  using their cached contents.


{Fig. \ref{ratePlot1} and \ref{ratePlot2} plot the lower bound  \eqref{eq_lowerbound}, together with  upper bounds  of various schemes including the modified decentralized and centralized HCC schemes (combined with the pipeline-forward scheme), and our proposed centralized and decentralized  schemes.    
It can be seen that the proposed schemes lead to lower transmission delay compared to the HCC schemes. In Fig. \ref{ratePlot1}, as shown in the dot-dash line (decentralized caching strategy), when $M_1=500=0.25N$, increasing $M_1$ can not reduce the transmission delay; as shown in the  line with stars, when $M_1=1000=0.5N$, increasing $M_1$ can not reduce the transmission delay. This  coincides with the results in Theorem \ref{UselessM1} showing that increasing  caching size of relay  will not always reduce the transmission delay. 
}

\begin{figure}
           \centering
	\includegraphics[width=0.5\textwidth]{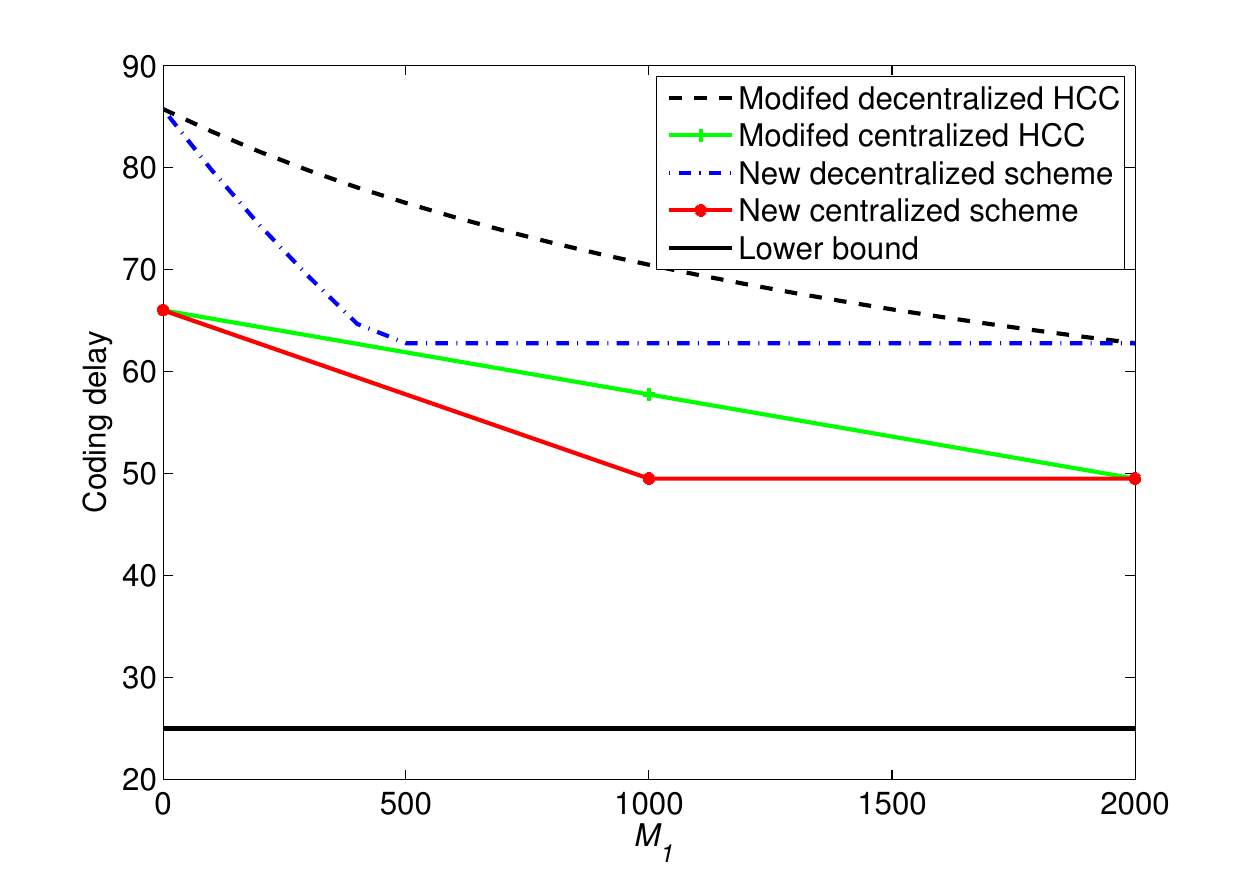}
	\caption{Bounds on $T^*$ for $N=2000$, $M_2=20$, $K_1=2$, $K_2=100$.}
	\label{ratePlot1}
\end{figure}

\begin{figure}
           \centering
	\includegraphics[width=0.5\textwidth]{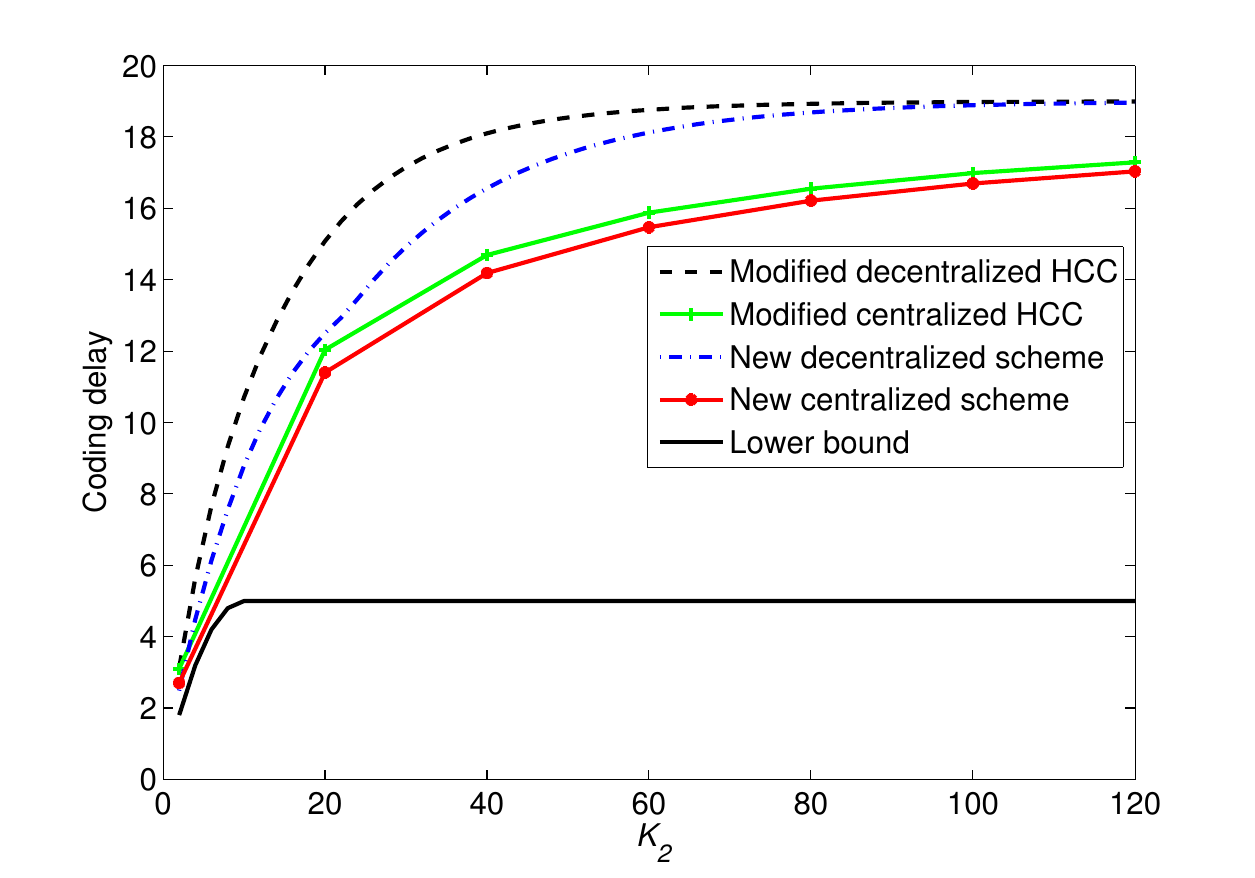}
	\caption{Bounds on $T^*$ for $N=2000$, $M_2=100$, $K_1=2$, $M_1=400$.}
	\label{ratePlot2}
\end{figure}

\section{Proof of Theorem \ref{delay of central}}\label{sec_proof_centralized}
In this section, we first present a centralized coded caching scheme for the two-layer network, and then combine it with pipeline-forward scheme, which leads to a upper bound on the transmission delay shown in Theorem \ref{delay of central}.

\subsection{Concurrent Centralized Caching Scheme}\label{centralized_nomix}

For any $K$, $N$ and cache size $M_1 \in \{0, N/K, 2N/K,...,N\}$, $M_2 \in \{0, N/K, 2N/K,...,N\}$, define
\[t_1 \triangleq\frac{K_1M_1}{N}, \quad t_2\triangleq\frac{K_2M_2}{N}.\]  Note that $t_1$ is an integer between 0 and $K_1$, and $t_2$  between 0 and $K_2$. 

In the placement phase, we split each file into $\binom{K_1}{t_1}\binom{K_2}{t_2}$ subfiles of equal size. Index the subfiles of $W_n$ by the superscript $\mathcal{Q} \in [K_1]$ and subscript $\mathcal{T} \in [K_2]$:
\begin{IEEEeqnarray}{rCl}
	W_n = \left(W_{n,\mathcal{T}}^{\mathcal{Q}}: \mathcal{Q} \in [K_1],\mathcal{T} \in [K_2], |\mathcal{Q}|=t_1,|\mathcal{T}|=t_2 \right).
	\nonumber
\end{IEEEeqnarray}
Relay $i$ caches all the subfiles when $i \in \mathcal{Q}$ and user $u^i_j$ stores all the subfiles when $j \in \mathcal{T}$ for all $n=1,...,N$. The above placement strategy requires each relay to store 
\[
N \cdot \frac{F}{\binom{K_1}{t_1}}\cdot \binom{K_1-1}{t_1-1}=F\cdot \frac{Nt_1}{K_1}=M_1F
\]
bits of files and each user to cache
\[
N \cdot \frac{F}{\binom{K_2}{t_2}}\cdot \binom{K_2-1}{t_2-1}=F\cdot \frac{Nt_2}{K_2}=M_2F
\]
bits of data, satisfying the cache size constraints for both relays and users.

Note that according to the prefetching strategy, users indexed by $u_j^i$ with different superscript $i$ but the same subscript $j$ prefetch same contents in their local cache memories. 

In the delivery phase, each user $u_j^i$ reveals its request $W_{d_j^i}$ to the server. The requests vector $\textbf{d}$ are informed by the server and all the relays during this phase. Our objective is to get the upper bound on transmission delay in the worst request case, so we assume that each user makes unique request in the following discussion. Since different parts of the file $W_{d_j^i}$ have been stored in the relays' caches and the users' caches, the subfiles needed by user $u_j^i$ can be divided into two parts: 

\begin{itemize}
\item Subfile \uppercase\expandafter{\romannumeral 1}: 
subfiles cached by other relays except relay $i$, i.e., 
$W_{d_j^{~\!\!i},\mathcal{T}}^{\mathcal{Q}}$, for all $ \mathcal{Q}\subseteq[K_1]:|\mathcal{Q}|=t_1$, $i\notin \mathcal{Q}$  and  $\mathcal{T} \subset [K_2]: |\mathcal{T}|=t_2$, $j\notin \mathcal{T}$.
\item subfile \uppercase\expandafter{\romannumeral 2}:
subfiles cached by relay $i$, i.e. $W_{d_j^{~\!\!i},\mathcal{T}}^{\mathcal{Q}}$, for $ \mathcal{Q}\subseteq[K_1]:|\mathcal{Q}|=t_1$, $i\in\mathcal{Q}$ and  $\mathcal{T} \subset [K_2]: |\mathcal{T}|=t_2$, $j\notin \mathcal{T}$.

\end{itemize}
Subfile \uppercase\expandafter{\romannumeral 1} have to be sent by the server to relay $i$. After receiving the server transmission, relay $i$ decodes the interested contents and forward the data to its attached user $u_j^i$. Subfile \uppercase\expandafter{\romannumeral 2}  will be sent by relay $i$ directly to its attached users. For the two different kinds of subfiles, in the followed discussion, we illustrate two kinds of data transmissions, which are labelled as Transmission \uppercase\expandafter{\romannumeral 1} and  \uppercase\expandafter{\romannumeral 2}, respectively. 
\begin{itemize}
	
	\item Transmission \uppercase\expandafter{\romannumeral 1}: Since there is no communication or cooperation among  the relays,  Subfiles \uppercase\expandafter{\romannumeral 1} can only be offered by the server. Note that for relay $i \in [K_2]$, all its attached users $u_j^i, j=1,...,K_2$ have worked as a group and cached the whole library of the server to their local cache memories distributedly. Besides, for any relay $k\in [K_1], i\neq k$, user $u_j^k$ has cached the same contents to local cache memory as user $u_j^i$. Therefore, the server can create signals that are useful for multiple groups of users connected with different relays simultaneously. 
	
	For each $\mathcal{R} \subset [K_1], |\mathcal{R}|=t_1+1$ and $\mathcal{S} \subset [K_2],|\mathcal{S}|=t_2+1$, the server sends the symbol
	\begin{IEEEeqnarray}{rCl}\label{CentralizedSeverSignal}
		\oplus_{i\in \mathcal{R}}\left( \oplus_{j\in\mathcal{S}} W_{d^i_j,\mathcal{S}\backslash \{j\}}^{\mathcal{R}\backslash\{i\}} \right)
	\end{IEEEeqnarray}
	to all the relays over the shared link, each of size $F/\binom{K_1}{t_1}\binom{K_2}{t_2}$ bits. 
	
	The relays use a block-Markov partial decode-forward strategy \cite{Cover'79}. More specifically,  assuming the whole transmission takes in $B+1$ blocks, at the beginning of block $b$, each relay starts collecting XOR symbols of block $b\in{1,...,B+1}$ sent from the server, and at the same time it starts sending the decoded XOR symbols of block $b-1$.  \footnote{The transmission in fact needs  some extra delay for sending the initial symbol, as in block $b=0$, the relays haven't receive any data from server. This extra delay can be ignored if the number of symbols is sufficiently large, or by letting the relays send subfiles stored in their cache memory to the users without waiting the server's transmission. } Thus, once relay $i$ receives the server transmission \eqref{CentralizedSeverSignal} , it decodes the interested XOR subfiles 
	\begin{IEEEeqnarray}{rCl}\label{CentralizedRelaySignal2}
		\oplus_{j\in\mathcal{S}} W_{d^i_j,\mathcal{S}\backslash \{j\}}^{\mathcal{R}\backslash\{i\}}
	\end{IEEEeqnarray}
	from \eqref{CentralizedSeverSignal} and forwards it to the attached users immediately. After observing the message \eqref{CentralizedRelaySignal2}, each user $u^i_j:j\in\mathcal{S}$ decodes $	W_{d^i_j,\mathcal{S}\backslash \{j\}}^{\mathcal{R}\backslash\{i\}}$ based on the cached content. 
	
	Consider a fixed relay $i\in \mathcal{R}:|\mathcal{R}|=t_1+1$  and user $u_j^i\in \mathcal{S}:|\mathcal{S}|=t_2+1$, there are $\binom{K_1-1}{t_1}$ partitions to form $\mathcal{R}$ and $\binom{K_2-1}{t_2}$ partitions to form $\mathcal{S}$, respectively. The rate $\bar{R}_\mathsf{p1,C}$ of Transmission \uppercase\expandafter{\romannumeral 1} is 
	\begin{equation}
	\begin{aligned}
	 \bar{R}_\mathsf{p1,C}&=K_1K_2 \cdot \frac{ \binom{K_1-1}{t_1}\binom{K_2-1}{t_2}}{(t_1+1)(t_2+1)} \cdot \frac{F}{\binom{K_1}{t_1}\binom{K_2}{t_2}}\cdot\frac{1}{F}\\
	&= \frac{(K_1-t_1)(K_2-t_2)}{(t_1+1)(t_2+1)}.
	\end{aligned}
	\end{equation}

	\item Transmission \uppercase\expandafter{\romannumeral 2}:  All the  relays create multicasting opportunities and send signals concurrently to their attached users. For relay $i \in \mathcal{R}: |\mathcal{R}|=t_1$, consider a subset $\mathcal{S} \subset [K_2]$ of  $|\mathcal{S}|=t_2+1$ users attached to it.  Observe a user $u_j^i:j\in \mathcal{S}$. Subfile $W_{d^i_j, \mathcal{S}\backslash\{j\}}^{\mathcal{R}}$ that is required by user $u_j^i$ is
	present in the cache of any user $u^i_k: k\in \mathcal{S}_i\backslash\{j\}$. Thus, for each subset $\mathcal{S} \subset [K_2]$ of size $|\mathcal{S}|=t_2+1$, relay $i$ multicasts 
	\begin{equation}
	\oplus_{j\in\mathcal{S}}W_{d^i_j, \mathcal{S}\backslash\{j\}}^{\mathcal{R}}\nonumber
	\end{equation}
	for each $j \in [K_2]$. Each transmission results in $F/ \binom{K_1}{t_1}\binom{K_2}{t_2}$ bits of data being sent over the shared link. Since there are $\binom{K_1-1}{t_1-1}$ different selections for $\mathcal{R}$ and the number of subset $\mathcal{S}$ is $\binom{K_2-1}{t_2}$,  the transmission rate for sending  Subfile \uppercase\expandafter{\romannumeral 2}, denoted by $\bar{R}_\mathsf{p2,C}$, can be summarized as follows: 
	\begin{equation}
	\begin{aligned}
	\bar{R}_\mathsf{p2,C}&=K_2 \cdot \frac{ \binom{K_1-1}{t_1-1}\binom{K_2-1}{t_2}}{t_2+1} \cdot \frac{F}{\binom{K_1}{t_1}\binom{K_2}{t_2}}\cdot\frac{1}{F}\\
	&= \frac{t_1}{K_1}\frac{K_2-t_2}{t_2+1}.
	\end{aligned}
	\end{equation}
	 
\end{itemize}
Since the server and relays are allowed to transmit  signals concurrently, there are three scenarios when the delivery phase begins:
\begin{itemize}
	\item $\bar{R}_\mathsf{p1,C}<\bar{R}_\mathsf{p2,C}$: When the size of Subfile \uppercase\expandafter{\romannumeral 2} is larger than Subfile \uppercase\expandafter{\romannumeral 1}, Transmission \uppercase\expandafter{\romannumeral 1} has been finished before the relays send all Subfile \uppercase\expandafter{\romannumeral 2}  to users. Then the relays decode the required subfiles from the server transmission and forward the messages to corresponding users. The transmission delay of the proposed caching scheme is 
	\[
	\bar{R}_\mathsf{s1,C}=\bar{R}_\mathsf{p1,C}+\bar{R}_\mathsf{p2,C}.
	\]   
	\item $\bar{R}_\mathsf{p1,C}=\bar{R}_\mathsf{p2,C}$: In this scenario, Transmission \uppercase\expandafter{\romannumeral 1} and \uppercase\expandafter{\romannumeral 2} finish synchronically. Then the relays forward the decoded messages to their attached users, resulting in the transmission delay to be
	\[
	\bar{R}_\mathsf{s1,C}=\bar{R}_\mathsf{p1,C}+\bar{R}_\mathsf{p2,C}.
	\] 
	\item $\bar{R}_\mathsf{p1,C}>\bar{R}_\mathsf{p2,C}$: When the size of Subfile \uppercase\expandafter{\romannumeral 1} is larger than Subfile \uppercase\expandafter{\romannumeral 2}, Transmission \uppercase\expandafter{\romannumeral 2} ends earlier. When Transmission \uppercase\expandafter{\romannumeral 2} finishes, $F\bar{R}_\mathsf{p2,C}$ bits of Subfile \uppercase\expandafter{\romannumeral 1} have been sent to the relays, while $F(\bar{R}_\mathsf{p1,C}-\bar{R}_\mathsf{p2,C})$ bits of them are still needed to be sent. All the relays then decode  and forward the useful parts of messages to their users, while they are receiving the server transmission simultaneously. After observing the  extra $F(\bar{R}_\mathsf{p1,C}-\bar{R}_\mathsf{p2,C})$ bits of Subfile \uppercase\expandafter{\romannumeral 1}, the relays operate in the same way, making the transmission delay to be
	\[
	\begin{aligned}
	\bar{R}_\mathsf{s1,C}&=\bar{R}_\mathsf{p2,C}+\bar{R}_\mathsf{p2,C}+(\bar{R}_\mathsf{p1,C}-\bar{R}_\mathsf{p2,C}).\\
	&=\bar{R}_\mathsf{p1,C}+\bar{R}_\mathsf{p2,C}.
	\nonumber
	\end{aligned}
	\]
\end{itemize}
According to the analysis above and \eqref{CentralizedSeverSignal}, \eqref{CentralizedRelaySignal2}, the transmission delay of the proposed caching scheme is 
\begin{equation}\label{eq_cen_nomix} 
\bar{R}_\mathsf{s1,C}=\bar{R}_\mathsf{p1,C}+\bar{R}_\mathsf{p2,C}.
\end{equation}

The coded caching scheme  is summarized in Algorithm 1. In order to explain the key steps in scheme described above, we will then present a simple example.

\begin{algorithm}\label{algoCentra}
	\caption{Centralized Caching for Two-Layer Network}
	\begin{algorithmic}[1]
		\STATE{\textbf{Procedure 1} PLACEMENT $(W_1,...,W_N)$}
		\STATE{$t_1 \leftarrow K_1M_1/N$, $t_2 \leftarrow K_2M_2/N$}
		\STATE{$\mathfrak{Q}\leftarrow \{ \mathcal{Q}\subset [K_1]: |\mathcal{Q}|=t_1\}$}
		\STATE{$\mathfrak{T}\hspace{0.7mm}\leftarrow \{ \mathcal{T}\subset [K_2]: |\mathcal{T}|=t_2\}$}
		\FOR{$n\in [N]$}
		\STATE{Split $W_n$ into $(W_{n,\mathcal{T}}^{\mathcal{Q}}:\mathcal{Q} \in \mathfrak{Q}, \mathcal{T} \in \mathfrak{T})$ of equal size}
		\ENDFOR
		\FOR{$i \in [K_1]$}
		\STATE{$Z_i\leftarrow (W_{n,\mathcal{T}}^{\mathcal{Q}}:n\in [N], \mathcal{Q} \in \mathfrak{Q}, i \in \mathcal{Q})$}
		\FOR{$j \in [K_2]$}
		\STATE{$Z^i_j\leftarrow (W_{n,\mathcal{T}}^{\mathcal{Q}}:n\in [N], \mathcal{T} \in \mathfrak{T}, j \in \mathcal{T})$}
		\ENDFOR
		\ENDFOR
		\STATE{\textbf{End Procedure}}
		\STATE{}
		\STATE{\textbf{Procedure 2} DELIVERY \uppercase\expandafter{\romannumeral 1} $(W_1,...,W_N, \textbf{d})$}
		\STATE{$t_1 \leftarrow K_1M_1/N$, $t_2 \leftarrow K_2M_2/N$}
		\FOR{$i \in [K_1]$}
		\FOR{$\mathcal{R}\subset [K_1]:|\mathcal{R}|=t_1, i \in \mathcal{R}$}
		\FOR{$\mathcal{S}\subset [K_2]:|\mathcal{S}|=t_2+1$}
		\STATE{$X_{i,\textbf{d}}\leftarrow 	\oplus_{j\in\mathcal{S}}W_{d^i_j, \mathcal{S}\backslash\{j\}}^{\mathcal{R}}$}
		\ENDFOR
		\ENDFOR 
		\ENDFOR
		\STATE{\textbf{End Procedure}}
		\STATE{}
		\STATE{\textbf{Procedure 3} DELIVERY \uppercase\expandafter{\romannumeral 2} $(W_1,...,W_N, \textbf{d})$}
		\STATE{$t_1 \leftarrow K_1M_1/N$, $t_2 \leftarrow K_2M_2/N$}
		\FOR{$i\in[K_1]$}
		\FOR{$\mathcal{R}\subset [K_1]:|\mathcal{R}|=t_1+1, i \in \mathcal{R}$}
		\FOR{$\mathcal{S}\subset [K_2]:|\mathcal{R}|=t_2$}
		\STATE{$X_{\textbf{d}}\leftarrow 		\oplus_{i\in \mathcal{R}}\left( \oplus_{j\in\mathcal{S}} W_{d^i_j,\mathcal{S}\backslash \{j\}}^{\mathcal{R}\backslash\{i\}} \right)$}
		\ENDFOR
		\ENDFOR 
		\ENDFOR
		\STATE{\textbf{End Procedure}}
		
	\end{algorithmic}
\end{algorithm}

\begin{Example}[Concurrent transmission with $M_2\neq 0$]\label{central_eg}
	Consider a network consisting of $K_1=3$ relays, each connected with $K_2=2$ users, and a library of $N=6$ files. Each relay is equipped with a cache of size $M_1=4$, while each user's cache size is $M_2=3$. 
	For ease of notation, assume that user $u^1_1,u^1_2, u^2_1,u^2_2,u^3_1, u^3_2$ want file $W_1,W_2,W_3,W_4,W_5,W_6,$ respectively. 
	
	In the placement phase, each file is split into $\binom{K_1}{t_1}\binom{K_2}{t_2}=6$ subfiles of equal size. The content placement of each relay is
	\begin{equation}
	\begin{aligned}
	&Z_1=(W_{n,\{1\}}^{\{12\}},W_{n,\{2\}}^{\{12\}},W_{n,\{1\}}^{\{13\}},W_{n,\{2\}}^{\{13\}})_{n=1}^6,\\
	&Z_2=(W_{n,\{1\}}^{\{12\}},W_{n,\{2\}}^{\{12\}},W_{n,\{1\}}^{\{23\}},W_{n,\{2\}}^{\{23\}})_{n=1}^6,\\
	&Z_3=(W_{n,\{1\}}^{\{13\}},W_{n,\{2\}}^{\{13\}},W_{n,\{1\}}^{\{23\}},W_{n,\{2\}}^{\{23\}})_{n=1}^6.	
	\end{aligned}
	\end{equation}
	And the content placement of each user is
	\begin{equation}
	\begin{aligned}
	&Z_1^1=Z_1^2=Z_1^3=(W_{n,\{1\}}^{\{12\}},W_{n,\{1\}}^{\{13\}},W_{n,\{1\}}^{\{23\}})_{n=1}^6,\\ &Z_2^1=Z_2^2=Z_2^3=(W_{n,\{2\}}^{\{12\}},W_{n,\{2\}}^{\{13\}},W_{n,\{2\}}^{\{23\}})_{n=1}^6.
	\end{aligned}
	\end{equation}
	
	During the delivery phase,  relay 1, 2 and 3 multicast XOR symbols 
	\[
	\begin{aligned}
	&W_{1,\{2\}}^{\{12\}}\oplus W_{2,\{1\}}^{\{12\}}, ~W_{1,\{2\}}^{\{13\}}\oplus W_{2,\{1\}}^{\{13\}};\\
	&W_{3,\{2\}}^{\{12\}}\oplus W_{4,\{1\}}^{\{23\}},~
	W_{3,\{2\}}^{\{12\}}\oplus W_{4,\{1\}}^{\{23\}};\\
	&W_{5,\{2\}}^{\{13\}}\oplus W_{6,\{1\}}^{\{23\}},~
	W_{5,\{2\}}^{\{13\}}\oplus W_{6,\{1\}}^{\{23\}}
	\end{aligned}
	\]
	respectively to their attached users, resulting in transmission rate   of $\bar{R}_\mathsf{p2,C}={1}/{3}$. At the  same time, the server sends 
	\[\small
	\begin{aligned}
	\big( W_{1,\{2\}}^{\{23\}}\oplus W_{2,\{1\}}^{\{23\}} \big) \oplus \big( W_{3,\{2\}}^{\{13\}}\oplus W_{4,\{1\}}^{\{13\}} \big) \oplus \big( W_{5,\{2\}}^{\{12\}}\oplus W_{6,\{1\}}^{\{12\}} \big)
	\end{aligned}
	\] 
	to all the relays. The rate of Transmission \uppercase\expandafter{\romannumeral 1} is $\bar{R}_\mathsf{p1,C}={1}/{6}$. After observing the server transmission, relay 1 decodes $\big( W_{1,\{2\}}^{\{23\}}\oplus W_{2,\{1\}}^{\{23\}} \big)$ and forward it to the users. Relay 2 and 3 operate in the same way and send $\big( W_{3,\{2\}}^{\{13\}}\oplus W_{4,\{1\}}^{\{13\}} \big)$ and $ \big( W_{5,\{2\}}^{\{12\}}\oplus W_{6,\{1\}}^{\{12\}} \big)$ to the corresponding users, respectively. The total transmission delay  is thus ${1}/{2}$ for this example. From this example,  we see that after receiving the  packages of subfiles from the server, relay doesn't need to decode it completely. Users with cache can process the last decoding work, reducing the transmission delay to a certain extent.
\end{Example}

\subsection{Hybrid Centralized Scheme}
Apply the similar method described in [3, Sec. \uppercase\expandafter{\romannumeral 5}-C] to combine the scheme described above with pipeline-forward scheme. Divide the system model into two subsystems with two parameters $\alpha,\beta\in[0,1]$. The first subsystem includes the entire cache memory in each relay and a $\beta$ fraction of each user's cache memory, and the second subsystem holds the remaining $1-\beta$ fraction of each user's cache memory. Additionally, each file is split into two parts of size $\alpha F$ and size $(1-\alpha)F$ bits, as showed in Fig. \ref{fig_hierarchical}. After that, the proposed caching scheme in Section \ref{centralized_nomix} is applied to the first subsystem to recover the $\alpha F$ bits of each file, and the pipeline forward caching scheme is applied to the second subsystem to recover the $(1-\alpha)F$ bits of each file.

Consider the first subsystem. The equivalent file size, the user's cache memory and  the relay's memory are $\alpha F$, $\frac{M_1F}{\alpha F }=\frac{M_1}{\alpha}$, and $\frac{\beta M_2F}{\alpha F  }=\frac{\beta M_2}{\alpha}$, respectively.  By \eqref{eq_cen_nomix}, we obtain the transmission delay of  the first subsystem as 
\begin{IEEEeqnarray}{rCl}
R_\mathsf{s1,C}=R_\mathsf{p1,C}+R_\mathsf{p2,C}.
\end{IEEEeqnarray}
where $R_\mathsf{p1,C}$, $R_\mathsf{p2,C}$ are defined  in Theorem \ref{delay of central}.  

Similarly, consider the involved parameters in the second subsystem, the equivalent file size  and user cache memory are  $(1-\alpha)F$ and  $\frac{(1-\beta)M_2}{(1-\alpha)}$
, respectively,  thus the transmission delay of the second subsystem  by using pipe-line forward scheme is 
\begin{equation}
 R_\mathsf{s2,C}= r_\mathsf{c}\Big(\frac{(1-\beta)M_2}{(1-\alpha)N},K_1K_2\Big).
\end{equation}

Merge the two subsystems into a complete system, we obtain the transmission delay mentioned in Theorem  \ref{delay of central}:
\begin{equation}
T_\mathsf{Pro,C}=\alpha R_\mathsf{s1,C}+(1-\alpha)R_\mathsf{s2,C}.
\end{equation}

\section{Proof of Theorem \ref{theorem_rateDec}}\label{sec_proof_decentralized}
In this section, we first present a decentralized coded caching scheme, which is carefully designed to 
reduce the transmission delay for a two-layer network by caching files in both relays' cache memories and users' cache memories. And we combine this proposed scheme with pipeline-forward scheme, getting the mixed scheme with transmission delay shown in Theorem \ref{theorem_rateDec}.


\subsection{Concurrent Caching Scheme}\label{sec_CCS}

We first propose a caching scheme and carefully design the algorithms to allocate the delivery of subfiles  stored in the server and relays. The decentralized placement procedure is applied to cache memory in relays and users independently, which results in that subfiles may be stored only in users, only in relays or both in users and relays. For each user $u^i_j\in\mathcal{U}$, we divide the subfiles required to be delivered into three parts: Subfiles \uppercase\expandafter{\romannumeral 1} are  those cached by other relays except relay $i$, and will be sent using a block decode-forward strategy;   Subfiles \uppercase\expandafter{\romannumeral 2} are those not cached by any relay, and will be send by  the  pipeline-forward  strategy introduced in Example \ref{eg_pipe}; Subfiles \uppercase\expandafter{\romannumeral 3} are those cached by relay $i$, and will be sent by the single-layer decentralized caching scheme.  When the server sends  Subfiles \uppercase\expandafter{\romannumeral 2}, there are some parts which are redundant for relay $i$. We let  relay $i$ send parts of Subfiles III  if the server's signal is not useful. 





The scheme is divided into the placement phase and the delivery phase. The placement phase is summarized in Algorithm 2. Specifically, each relay $i$ uses  caching function $\phi _i$ to map the $N$ files into its $M_1F$-bit cache randomly and independently. Similarly, each {user} $u^i_j$ uses caching function $\phi _{j}^{i}$ to map the files into its $M_2F$-bit cache randomly and independently, which is showed in Algorithm 1. Therefore, each file $W_n$ is divided into multiple subfiles, i.e., for $n=1,\ldots,N$
\[W_n=\left\{W_{{n}, \mathcal{S}}^{\mathcal{Q}},~\textnormal{for all}~{\mathcal{Q}\subseteq[K_1],\mathcal{S}\subseteq\mathcal{U}}\right\}.\]
The subfiles stored in user $u^i_j$ is  \[\left\{W_{{1}, \mathcal{S}}^{\mathcal{Q}},\ldots,W_{{N}, \mathcal{S}}^{\mathcal{Q}},~\textnormal{for all}~{\mathcal{Q}\subseteq[K_1],\mathcal{S}\subseteq\mathcal{U}~\textnormal{with $u_j^i\in \mathcal{S}$}}\right\}.\]
The subfiles stored in relay $i$ is  \[\left\{W_{{1}, \mathcal{S}}^{\mathcal{Q}},\ldots,W_{{N}, \mathcal{S}}^{\mathcal{Q}},~\textnormal{for all}~{\mathcal{Q}\subseteq[K_1],\mathcal{S}\subseteq\mathcal{U}~\textnormal{with $i\in \mathcal{Q}$}}\right\}.\]

 Note that the request vector $\mathbf{d}$ is not informed during this phase and all caching functions select contents to cache completely arbitrarily. When the file size $F$ is large, by the law of large numbers, the subfile size with high probability can be written as
\begin{equation}
\begin{aligned}
\left|W_{n,\mathcal{S}}^{\mathcal{Q}}\right|\approx& \Big(\frac{M_1}{N}\Big)^{\left|\mathcal{Q}\right|}\cdot\Big(1-\frac{M_1}{N}\Big)^{K_1-\left|\mathcal{Q}\right|}\\&\cdot\Big(\frac{M_2}{N}\Big)^{\left|\mathcal{S}\right|}\cdot\Big(1-\frac{M_2}{N}\Big)^{K_1K_2-\left|\mathcal{S}\right|}F.
\end{aligned}
\end{equation}
It's easy to verified that under the placement given in Algorithm 1, each relay and user fill to $M_1F$ and $M_2F$ bits to its caches, respectively.

 \begin{algorithm}
\caption{Placement Phase}
\begin{algorithmic}[1]
\FOR{$n\in[N]$}
\FOR{$i\in[K_1]$}
\STATE{relay $i$ independently caches a subset of $\frac{M_1F}{N}$ bits of file $W_n$, chosen uniformly at random}
\FOR{$j\in[K_2]$}
\STATE{$\text{User}~u^i_j$ independently caches a subset of $\frac{M_2F}{N}$ bits of file $W_n$, chosen uniformly at random}
\ENDFOR
\ENDFOR
\ENDFOR
\end{algorithmic}
\end{algorithm}

In the delivery phase, each {user} $u^i_j$ requests a file  $W_{d_j^{~\!\!i}}$, and  the request vector $\mathbf{d}$ is promoted to the server and relays. Our objective is to get the upper bound on transmission delay in the worst request case, so we assume that each of the users makes unique request in the following discussion. 
The subfiles  of $W_{d_j^{~\!\!i}}$ requested by user $u^i_j$ can be characterized into three types:
\begin{itemize}
\item Subfiles \uppercase\expandafter{\romannumeral 1}: the subfiles cached by other relays except relay $i$, i.e., 
 $W_{d_j^{~\!\!i},\mathcal{S}\backslash \{u^i_j\}}^{\mathcal{Q}}$, for all $\mathcal{S}\subset\mathcal{U}, \mathcal{Q}\subset [K_1]$ with $i\notin\mathcal{Q}$.

\item Subfiles \uppercase\expandafter{\romannumeral 2}: the subfiles not cached by any relay, i.e.,  $W_{d_j^{~\!\!i},\mathcal{S}\backslash \{u^i_j\}}^{\emptyset}$, for all $\mathcal{S}\subset\mathcal{U}$.
\item Subfiles \uppercase\expandafter{\romannumeral 3}: the subfilles cached by relay $i$, i.e.,  $W_{d_j^{~\!\!i},\mathcal{S}\backslash \{u^i_j\}}^{\mathcal{Q}}$, for all $\mathcal{S}\subset\mathcal{U}, \mathcal{Q}\subseteq[K_1]$ with $i\in\mathcal{Q}$.
\end{itemize} 


Next we illustrate the transmission of the three types of subfiles. For ease of notation, the corresponding transmission  is labelled as Transmission \uppercase\expandafter{\romannumeral 1}, \uppercase\expandafter{\romannumeral 2}, \uppercase\expandafter{\romannumeral 3}, respectively.

\begin{itemize}
\item Transmission \uppercase\expandafter{\romannumeral 1}: Since there is no communication or cooperation among  the relays,   Subfiles \uppercase\expandafter{\romannumeral 1} can only be delivered from the server to users.     The server sends the following symbol to the relays
\begin{subequations}\label{eq_symbolS1}
\begin{IEEEeqnarray}{rCl}
&&\oplus_{i\in\mathcal{R}}\left(\oplus_{u^i_j\in\mathcal{S}}W_{d_j^{~\!\!i},\mathcal{S}_i\backslash \{u^i_j\}\cup{\mathcal{T}}}^{\mathcal{R}\backslash \{i\}}\right),
\end{IEEEeqnarray}
for each $j\in{[K_2]}$ and 
\begin{IEEEeqnarray}{rCl}
\mathcal{R}&\subseteq&[K_1]: |\mathcal{R}|=r, r=K_1,K_1-1,\ldots,2,\\
\mathcal{S}_i&\subseteq &\mathcal{U}_i:u^i_j\in\mathcal{S}_i~\text{and}~|\mathcal{S}_i|=s, s\in[K_2],\\
\mathcal{T}&\subseteq& \mathcal{U}\backslash \mathcal{U}_i:|\mathcal{T}|=t,t\in \{K_1K_2-K_2,\ldots,0\}.\quad 
\end{IEEEeqnarray}
\end{subequations}
For a given tuple of parameters $(j,\mathcal{R},\mathcal{S}_i,\mathcal{T})$, after   relay $i\in\mathcal{R}$ observing the symbol in \eqref{eq_symbolS1}, 
it  decodes  the  message   
\begin{IEEEeqnarray}{rCl}\label{eq_symbolS1Relayi}
\oplus_{u^i_j\in\mathcal{S}_i} W_{d_j^{~\!\!i},\mathcal{S}_i\backslash \{u^i_j\}\cup{\mathcal{T}}}^{\mathcal{R}\backslash \{i\}},
\end{IEEEeqnarray} and  forwards it to its attached users. 

After observing the message \eqref{eq_symbolS1Relayi}, each user $u^i_j\in\mathcal{S}_i$ decodes the following requested subfile based on the cached content \[W_{d_j^{~\!\!i},\mathcal{S}_i\backslash \{u^i_j\}\cup{\mathcal{T}}}^{\mathcal{R}\backslash \{i\}}
.\]

Notice that the relays can simultaneously receive and transmit signals, that means  when each relay   $i\in\mathcal{R}$ decodes and forwards the  message \eqref{eq_symbolS1Relayi},   the server can keep sending the symbol  for a different tuple of parameters $(j,\mathcal{R},\mathcal{S}_i,\mathcal{T})$. This procedure is similar to the block-Markov coding scheme in \cite{Cover'79}, where in every current block the transmitter sends a new source message, and the relay decodes and forwards the signal received from the previous block.

According to the delivery strategy described above,  Subfiles  \uppercase\expandafter{\romannumeral 1} can be perfectly known at  the corresponding users.  The delay of Transmission I, denoted by $\bar{R}_\mathsf{p1,D}$, is thus\footnote{The transmission in fact needs  some extra delay for sending the initial symbol, since the relays at the beginning do not have message $\eqref{eq_symbolS1Relayi}$, but this extra delay can be ignored if the number of symbols is sufficiently large, or by letting the relays send parts of Subfile III to the users without waiting the server's transmission. }
\begin{IEEEeqnarray}{rCl}\label{eq_rateT1}
\bar{R}_\mathsf{p1,D}&=&\Big[r_{\mathsf{d}}\Big(\frac{M_1}{N},K_1\big)\!-\!K_1\big(1\!-\!\frac{M_1}{N}\big)^{\!K_1}\Big] r_{\mathsf{d}}\big(\frac{M_2}{N},K_2\big).\nonumber\\ 
\end{IEEEeqnarray}


\item Transmission \uppercase\expandafter{\romannumeral 2}: Subfiles II also need to be sent originally from the the server. The server sends symbol
\begin{IEEEeqnarray}{rCl}\label{eq_subfileIIserver}
\oplus_{u_j^{~\!\!i}\in\mathcal{S}}W_{d_j^{~\!\!i},\mathcal{S}\backslash \{u_j^{~\!\!i}\}}^{\emptyset},
\end{IEEEeqnarray}
for each subset $\mathcal{S}\subseteq \mathcal{U}:|\mathcal{S}|=s, s\in{[K_1K_2]}$.
Thus the delay of sending all Subfiles II, denoted by $R_\textnormal{T2}$, is
\begin{IEEEeqnarray}{rCl}\label{eq_rateT2}
\bar{R}_\mathsf{p2,D}=\Big(1-\frac{M_1}{N}\Big)^{K_1} r_{\mathsf{d}}\Big(\frac{M_2}{N},K_1K_2\Big).
\end{IEEEeqnarray}

For a given subset $S$ with $S\! ~\cap~\!\mathcal{U}_i\neq\emptyset$, the symbol \eqref{eq_subfileIIserver} contains the requested subfiles for relay $i$'s attached users. We call this kind of symbol  the \emph{useful symbol} of Subfiles II for relay $i$. Relay $i$ uses pipe-line forward scheme as described in Example \ref{eg_pipe} to send the message   \[\oplus_{u_j^{~\!\!i}\in\mathcal{S}}W_{d_j^{~\!\!i},\mathcal{S}\backslash \{u_j^{~\!\!i}\}}^{\emptyset}.\]  
Each user $u^i_j\in\mathcal{S}$ decodes its requested subfile $W_{d_j^{~\!\!i},\mathcal{S}\backslash \{u_j^{~\!\!i}\}}^{\emptyset}$ based on the cached content.


For a given set subset $S$  with $S\! ~\cap~\!\mathcal{U}_i=\emptyset$,  the symbol \eqref{eq_subfileIIserver} does not contain any subfile requested by  relay $i$'s attached users. We call this kind of symbol  the \emph{redundant symbol} of Subfiles II for  relay $i$. The rate of this part, denoted by $\bar{R}_e$, can be computed as
\begin{IEEEeqnarray*}{rCl}
\bar{R}_\mathsf{e,D}=\Big(1-\frac{M_1}{N}\Big)^{\!K_1} \!\!\Big(1-\frac{M_2}{N}\Big)^{\!K_2} \!r_{\mathsf{d}}\Big(\frac{M_2}{N},(K_1\!-\!1)K_2\Big).
\end{IEEEeqnarray*}

Here, we exploit this spared time resource by letting each relay $i$ send some parts of Subfiles III. This is possible since  Subfiles III have already been stored in the relay's  cache memory during the placement phase.  More specifically, when the server sends the symbol \eqref{eq_subfileIIserver} with $S\! ~\cap~\!\mathcal{U}_i\neq\emptyset$, each relay $i$  sends
\begin{subequations}\label{eq_subfileIIIPart}
\begin{IEEEeqnarray}{rCl} \label{eq_subfileIIISymbol}
\oplus_{u_j^{~\!\!i}\in\mathcal{S}'_i}W_{d_j^{~\!\!i},\mathcal{S}'_i\backslash \{u_j^{~\!\!i}\}\cup{\mathcal{T}'}}^{\mathcal{R}'}
\end{IEEEeqnarray} 
for some set 
\begin{IEEEeqnarray}{rCl} \label{eq_subfileIIISet}
&&\mathcal{R}' \subseteq[K_1]: i\in \mathcal{R}',\\
&&\mathcal{S}'_i\subseteq \mathcal{U}_i:|\mathcal{S}_i|=s,s\in[K_2],\\
&&\mathcal{T}'\subseteq\mathcal{U} \backslash \mathcal{U}_i:|\mathcal{T}'|=t, t = K_1K_2-K_2, \ldots, 0,\quad 
\end{IEEEeqnarray} 
\end{subequations}
such that the rate of delivering these symbols equals to $\min\{\bar{R}_\mathsf{e,D},\bar{R}_\mathsf{p3,D}\}$, where $\bar{R}_\mathsf{p3,D}$ denotes the rate required to  send by relay nodes. Due to the different sizes of symbols in \eqref{eq_subfileIIserver} and \eqref{eq_subfileIIIPart}, we may not be able to find $(\mathcal{R}', \mathcal{S}'_i, \mathcal{T}'_i)$ such that the delay of sending symbols \eqref{eq_subfileIIIPart} exactly equals to $R_e$. One can obviate this problem by splitting  Subfiles III into the smaller pico-files and sending these pico-files in the same way as  \eqref{eq_subfileIIIPart}.

Each User $u^i_j$ decodes Subfiles II $W_{d_j^{~\!\!i},\mathcal{S}\backslash \{u_j^{~\!\!i}\}}^{\emptyset}$ and part of Subfiles III $W_{d_j^{~\!\!i},\mathcal{S}'_i\backslash \{u_j^{~\!\!i}\}\cup{\mathcal{T}'}}^{\mathcal{R}'}$ based on its cached content.

\item Transmission \uppercase\expandafter{\romannumeral 3}: Review the three subfiles summarized above, when Transmission \uppercase\expandafter{\romannumeral 1} and \uppercase\expandafter{\romannumeral 2} finished, only the remaining parts of the Subfiles \uppercase\expandafter{\romannumeral 3} need to be transmitted from the  relays to their attached users. The server does not send any symbol in Transmission \uppercase\expandafter{\romannumeral 3}.  All the relays concurrently send signals to their attached users, i.e.,  relay $i$ concurrently sends the symbol same as \eqref{eq_subfileIIIPart} except that  $(\mathcal{R}', \mathcal{S}'_i, \mathcal{T}')$ are replaced by $(\mathcal{R}'', \mathcal{S}''_i, \mathcal{T}'')$, respectively, and $\mathcal{R}'\cup \mathcal{R}''=[K_1]$, $\mathcal{S}'_i\cup \mathcal{S}''_i=\mathcal{U}_i$ and $\mathcal{T}'\cup \mathcal{T}''=\mathcal{U}\backslash\mathcal{U}$.

The rate for sending Subfiles III, denoted by $\bar{R}_\mathsf{p3,D}$, is 
\begin{IEEEeqnarray}{rCl}
\bar{R}_\mathsf{p3,D}=\frac{M_1}{N} r_{\mathsf{d}}\Big(\frac{M_2}{N},K_2\Big).
\end{IEEEeqnarray}
Thus the delay of Transmission III, denoted by $R_\mathsf{T_3,D}$,  is
\begin{IEEEeqnarray}{rCl}\label{eq_rateT3}
R_\mathsf{T_3,D}=\max(\bar{R}_\mathsf{p3,D}-\bar{R}_\mathsf{e,D},0).
\end{IEEEeqnarray}
Combines \eqref{eq_rateT1}, \eqref{eq_rateT2} and \eqref{eq_rateT3}, we obtain the  delay of the whole transmission, denoted by  $\bar{R}_\mathsf{s2,D}$, is
\begin{IEEEeqnarray}{rCl}\label{eq_delayPropose}
\bar{R}_\mathsf{s2,D}=\bar{R}_\mathsf{p1,D}+\bar{R}_\mathsf{p2,D}+ \max(\bar{R}_\mathsf{p3,D}-\bar{R}_\mathsf{e,D},0).\qquad
\end{IEEEeqnarray}

\end{itemize}

The delivery phase is summarized in Algorithm 3. Table \ref{Table_DeliveryExample} shows the   order of transmitting Subfiles I and II and III. 
In order to illustrate the caching and delivery scheme described above, we consider an example as below.

\begin{table}\centering
\caption{Delivery of Transmission I, II and III.}
\setlength{\tabcolsep}{0.5mm}{
\centering
\begin{tabular}{|c|c|c|c|}
\hline
Transmission &Server&Each Relay &Rate\\
\hline
I& \multicolumn{2}{c|}{Subfiles I} &  $\bar{R}_\mathsf{p1,D}$\\
\hline
\multirow{2}*{II} & \multirow{2}*{Subfiles II} & Useful symbols of Subfiles II & $\bar{R}_\mathsf{p2,D}-\bar{R}_\mathsf{e,D}$ \\ \cline{3-4}& &  Parts of Subfiles III& $\bar{R}_\mathsf{e,D}$\\
\hline
III & & Remaining parts of Subfiles III &$R_\mathsf{T_3,D}$\\
\hline
\end{tabular}}\label{Table_DeliveryExample}
\end{table}

\begin{algorithm}\label{algoDecentra}
\caption{Delivery Phase}
\begin{algorithmic}[1]
\STATE{\textbf{Procedure 1} DELIVERY of SUBFILES \uppercase\expandafter{\romannumeral 1}}
\FOR{$r=K_1,K_1-1,\ldots,2$}
		\FOR{$\mathcal{R}\subseteq [K_1]:|\mathcal{R}|=r$}
		\FOR{$s\in[K_2]$}
		\FOR{$j\in [K_2]$}
		\FOR{$i\in\mathcal{R}$}
		\FOR{$\mathcal{S}_i\subseteq \mathcal{U}_i:|\mathcal{S}_i|=s$  and $u^i_j\in\mathcal{S}_i$}
		\FOR{$t=K_1K_2-K_2,\ldots 1,0$}
		\FOR{$\mathcal{T}\subseteq \mathcal{U}\backslash \mathcal{U}_i:|\mathcal{T}|=t$}
		\STATE{$X_{\textbf{d}} \!\leftarrow \!\!\oplus_{i\in\mathcal{R}}\!\left(\!\oplus_{u^i_j\in\mathcal{S}_i}W_{d_j^{~\!\!i},\mathcal{S}_i\backslash \{u^i_j\}\cup{\mathcal{T}}}^{\mathcal{R}\backslash \{i\}}\right)$\\
		$X_{i,\textbf{d}}\leftarrow X_{\textbf{d}}$ }
		\ENDFOR
		\ENDFOR
		\ENDFOR
		\ENDFOR
		\ENDFOR
		\ENDFOR
		\ENDFOR
		\ENDFOR
\STATE{\textbf{End Procedure}}
\STATE{}
\STATE{\textbf{Procedure 2} DELIVERY of SUBFILES \uppercase\expandafter{\romannumeral 2}}
\FOR{$s \in [K_1K_2]$}
	\FOR{$\mathcal{S}\subseteq \mathcal{U}:|\mathcal{S}|=s$}
	\STATE{$X_{\textbf{d}}\leftarrow  \oplus_{u_j^{~\!\!i}\in\mathcal{S}}W_{d_j^{~\!\!i},\mathcal{S}\backslash \{u_j^{~\!\!i}\}}^{\emptyset}$\\
	 $X_{i,\textbf{d}}\leftarrow  \oplus_{u_j^{~\!\!i}\in\mathcal{S}}W_{d_j^{~\!\!i},\mathcal{S}\backslash \{u_j^{~\!\!i}\}}^{\emptyset}$ if $\mathcal{S}\cap \mathcal{U}_i\neq \emptyset$}
	\ENDFOR
	\ENDFOR
\STATE{\textbf{End Procedure}}
\STATE{}
\STATE{\textbf{Procedure 3} DELIVERY of SUBFILES  \uppercase\expandafter{\romannumeral 3}}
\STATE{Relay $i\in[K_1]$}
\FOR{$\mathcal{R} \subseteq[K_1]$ : $i\in \mathcal{R}$}
\FOR{$s \in [K_2]$}
	\FOR{$\mathcal{S}_i\subseteq \mathcal{U}_i:|\mathcal{S}_i|=s$}
	\FOR{$t = K_1K_2-K_2, \ldots, 0$}
	\FOR{$\mathcal{T}\subseteq\mathcal{U} \backslash \mathcal{U}_i:|\mathcal{T}|=t$}

	\STATE{$X_{i,\textbf{d}}\leftarrow \oplus_{u_j^{~\!\!i}\in\mathcal{S}_i}W_{d_j^{~\!\!i},\mathcal{S}_i\backslash \{u_j^{~\!\!i}\}\cup{\mathcal{T}}}^{\mathcal{R}}$}
	\ENDFOR
	\ENDFOR
	\ENDFOR
	\ENDFOR
	\ENDFOR
	\STATE{\textbf{End Procedure}}
	\end{algorithmic}
\end{algorithm}

\begin{Example}
Consider the two-layer network  with $N=4$ files ($A$, $B$, $C$ and $D$), $K_1=K_2=2$, and $M_1,M_2\in(0,N]$.

In the placement phase, using Algorithm 2, relay 1 and relay 2 independently store a random $M_1F/4$-bit subset of each file, and four users independently store a random $M_2F/4$-bit subset of each file.  Let $A_{\mathcal{S}}^{\mathcal{Q}}$ denotes the subfile of file $A$ that are stored in the cache memories of users in $\mathcal{S}$ and relays in $\mathcal{Q}$, where $\mathcal{S}\subset \{1, 2, 3, 4\}, \mathcal{Q}\subset \{1, 2\}$. For example, $A_{2,3}^{1}$ is the subfile of $A$ cached by user 2, 3 and relay 1. 


\begin{equation}
A=\left\{
\begin{aligned}
&A_{\emptyset}^{\emptyset},A_{1}^{\emptyset},A_{2}^{\emptyset},A_{3}^{\emptyset},\ldots,A_{1,2,3,4}^{\emptyset}\\
&A_{\emptyset}^{1},A_{1}^{1},A_{2}^{1},A_{3}^{1},\ldots,A_{1,2,3,4}^{1}\\
&A_{\emptyset}^{2},A_{1}^{2},A_{2}^{2},A_{3}^{2},\ldots,A_{1,2,3,4}^{2}\\
&A_{\emptyset}^{12},A_{1}^{1,2},A_{2}^{1,2},A_{3}^{12},\ldots,A_{1,2,3,4}^{1,2},\\
\end{aligned}
\right.
\end{equation}
\begin{equation}
\begin{aligned}
\left|W_{d_j^{~\!\!i},\ \mathcal{S}}^{\mathcal{Q}}\right|\approx& \Big(\frac{M_1}{4}\Big)^{\left|\mathcal{Q}\right|}\cdot\Big(1-\frac{M_1}{4}\Big)^{2-\left|\mathcal{Q}\right|}\\&\cdot\Big(\frac{M_2}{4}\Big)^{\left|\mathcal{S}\right|}\cdot\Big(1-\frac{M_2}{4}\Big)^{4-\left|\mathcal{S}\right|}F.
\end{aligned}
\end{equation}

In the delivery phase, we apply Algorithm 2 to send Subfiles \uppercase\expandafter{\romannumeral 1}, \uppercase\expandafter{\romannumeral 2} and \uppercase\expandafter{\romannumeral 3}.
More specifically,  the transmission for Subfiles \uppercase\expandafter{\romannumeral 1} is
\begin{equation*}
\begin{aligned}
&A_{\emptyset}^{2}\oplus C_{\emptyset}^{1},B_{\emptyset}^{2}\oplus D_{\emptyset}^{1},A_{2}^{2}\oplus B_{1}^{2}\oplus C_{4}^{1}\oplus D_{3}^{1},\\&A_{3}^{2}\oplus C_{1}^{1},A_{4}^{2}\oplus C_{2}^{1},B_{3}^{2}\oplus D_{1}^{1},B_{4}^{2}\oplus D_{2}^{1},\\&A_{2,3}^{2}\oplus B_{1,3}^{2}\oplus C_{1,4}^{1}\oplus D_{1,3}^{1},A_{2,4}^{2}\oplus B_{1,4}^{2}\oplus C_{2,4}^{1}\oplus D_{2,3}^{1}, \\&A_{3,4}^{2}\oplus C_{1,2}^{1},B_{3,4}^{2}\oplus D_{1,2}^{1},A_{2,3,4}^{2}\oplus B_{1,3,4}^{2}\oplus C_{1,2,4}^{1}\oplus D_{1,2,3}^{1}.
\end{aligned}
\end{equation*}
For a large enough file size $F$, this requires a communication rate
\begin{equation}\label{Rp1_ex}
\bar{R}_\mathsf{p1,D}=\frac{M_1}{4}\cdot\Big(1-\frac{M_1}{4}\Big)\cdot r_{\mathsf{d}}\Big(\frac{M_2}{4},2\Big).
\end{equation}
The transmission for Subfiles \uppercase\expandafter{\romannumeral 2} is
\begin{equation*}
\begin{aligned}
&A_{\emptyset}^{\emptyset},B_{\emptyset}^{\emptyset},C_{\emptyset}^{\emptyset},D_{\emptyset}^{\emptyset},\\&A_{2}^{\emptyset}\oplus B_{1}^{\emptyset},A_{3}^{\emptyset}\oplus C_{1}^{\emptyset},A_{4}^{\emptyset}\oplus D_{1}^{\emptyset},C_{2}^{\emptyset}\oplus B_{3}^{\emptyset},D_{2}^{\emptyset}\oplus B_{4}^{\emptyset},C_{4}^{\emptyset}\oplus D_{3}^{\emptyset},\\&A_{2,3}^{\emptyset}\oplus B_{1,3}^{\emptyset}\oplus C_{1,2}^{\emptyset},A_{2,4}^{\emptyset}\oplus B_{1,4}^{\emptyset}\oplus D_{1,2}^{\emptyset},\\&A_{3,4}^{\emptyset}\oplus C_{1,4}^{\emptyset}\oplus D_{1,3}^{\emptyset},B_{3,4}^{\emptyset}\oplus C_{2,4}^{\emptyset}\oplus D_{2,3}^{\emptyset},\\&A_{2,3,4}^{\emptyset}\oplus B_{1,3,4}^{\emptyset}\oplus C_{1,2,4}^{\emptyset}\oplus D_{1,2,3}^{\emptyset}.
\end{aligned}
\end{equation*}
The communication rate is
\begin{equation}\label{Rp2_ex}
\bar{R}_\mathsf{p2,D}=\Big(1-\frac{M_1}{4}\Big)^2\cdot r_{\mathsf{d}}\Big(\frac{M_2}{4},4\Big).
\end{equation}
The transmission for Subfiles \uppercase\expandafter{\romannumeral 3} contains two parts:
\begin{itemize}
\item From relay 1 to user 1 and 2
\begin{equation*}
\begin{aligned}
&A_{\emptyset}^{\mathcal{Q}_1},B_{\emptyset}^{\mathcal{Q}_1},A_{2}^{\mathcal{Q}_1}\oplus B_{1}^{\mathcal{Q}_1},A_{3}^{\mathcal{Q}_1},A_{4}^{\mathcal{Q}_1},B_{3}^{\mathcal{Q}_1},B_{4}^{\mathcal{Q}_1}\\&A_{2,3}^{\mathcal{Q}_1}\oplus B_{1,3}^{\mathcal{Q}_1},A_{2,4}^{\mathcal{Q}_1}\oplus B_{1,4}^{\mathcal{Q}_1},A_{3,4}^{\mathcal{Q}_1},B_{3,4}^{\mathcal{Q}_1},A_{2,3,4}^{\mathcal{Q}_1}\oplus B_{1,3,4}^{\mathcal{Q}_1}
\end{aligned}
\end{equation*}
\item From relay 2 to  user 3 and 4
\begin{equation*}
\begin{aligned}
&C_{\emptyset}^{\mathcal{Q}_2},D_{\emptyset}^{\mathcal{Q}_2},C_{4}^{\mathcal{Q}_2}\oplus D_{3}^{\mathcal{Q}_2},C_{1}^{\mathcal{Q}_2},C_{2}^{\mathcal{Q}_2},D_{1}^{\mathcal{Q}_2},D_{2}^{\mathcal{Q}_2}\\&C_{14}^{\mathcal{Q}_2}\oplus D_{1,3}^{\mathcal{Q}_2},C_{2,4}^{\mathcal{Q}_2}\oplus D_{2,3}^{\mathcal{Q}_2},C_{1,2}^{\mathcal{Q}_2},D_{1,2}^{\mathcal{Q}_2},C_{1,2,4}^{\mathcal{Q}_2}\oplus D_{1,2,3}^{\mathcal{Q}_2}
\end{aligned}
\end{equation*}
\end{itemize}
Here $\mathcal{Q}_1$ denotes the subsets of the relay $\{1,2\}$ which includes relay 1. And in this case, $\mathcal{Q}_1$ are $\{1\},\{1,2\}$. Similarly, $\mathcal{Q}_2$ denotes the subsets of the relay $\{1,2\}$ which includes relay 2, and here $\mathcal{Q}_2$ are $\{2\},\{1,2\}$.
Relay 1 and relay 2 can transmit these subfiles simultaneously and respectively, so the normalized rate is 
\begin{equation}\label{Rp3_ex}
\bar{R}_\mathsf{p3,D} = \frac{M_1}{4}\cdot r_{\mathsf{d}}\Big(\frac{M_2}{N},2\Big).
\end{equation}
The redundant symbols of  Subfiles II for the relays are as below:
\begin{itemize}
\item Relay 1 does not need
\begin{equation*}
C_{\emptyset}^{\emptyset},D_{\emptyset}^{\emptyset},C_{4}^{\emptyset}\oplus D_{3}^{\emptyset}.
\end{equation*}
\item Relay 2 does not need
\begin{equation*}
A_{\emptyset}^{\emptyset},B_{\emptyset}^{\emptyset},A_{2}^{\emptyset}\oplus B_{1}^{\emptyset}.
\end{equation*}
\end{itemize}
The rate of sending redundant symbols for each relay is
\begin{IEEEeqnarray}{rCl}\label{Re_ex}
\bar{R}_\mathsf{e,D}=\Big(1-\frac{M_1}{4}\Big)^2\Big(1-\frac{M_2}{4}\Big)^2\cdot r_{\mathsf{d}}\Big(\frac{M_2}{4},2\Big).
\end{IEEEeqnarray}

Thus, combine and \eqref{Rp1_ex}, \eqref{Rp2_ex}, \eqref{Rp3_ex} and \eqref{Re_ex} , the transmission delay  of this scheme is  
\begin{IEEEeqnarray}{rCl}
 \bar{R}_\mathsf{s1,D}\triangleq\bar{R}_\mathsf{p1,D}+\bar{R}_\mathsf{p2,D}+\max\{\bar{R}_\mathsf{p3,D}-\bar{R}_\mathsf{e,D},0\}.
\end{IEEEeqnarray}

\subsection{Hybrid Decentralized Scheme}\label{sec_CCS}
Now apply the similar method described in [3, Sec. \uppercase\expandafter{\romannumeral 5}-C] to combine the scheme described above with the pipeline-forward scheme. Denote the transmission delay caused by the pipeline-forward scheme as $R_\mathsf{s2,D}$, then we obtain the transmission delay\begin{equation}
T_\mathsf{Pro,D}=\alpha R_\mathsf{s1,D}+(1-\alpha)R_\mathsf{s2,D}.
\end{equation}


\end{Example}

\section{Conclusions}\label{sec_conclusion}
 
 In this paper, we proposed  coded caching schemes for the  cache-aided relay network, where a server accesses a library of files and wishes to communicate with users with the help of caches and relays.   We design a centralized and a decentralized  caching schemes that fully exploit the spared  time resource by allowing the concurrent transmission between the two layers. It is shown that both caching schemes are approximately optimal and can further reduce the transmission delay compared to  the previously known caching scheme.  Moreover, we  show that if each relay's caching size equals to a threshold, e.g., 38.2\% of full library's size for the two-relay case,  increasing the relay's caching size will not  reduce the transmission latency.



\appendices

\section{Choice of $\alpha$ and $\beta$}\label{pr_ab}
Recall $t_1=K_1M_1/N$ and $t_2=K_2M_2/N$.
\subsection{Choice of $(\alpha,\beta)$ for $T_\mathsf{Hcc,C}$ } 
Let
\begin{IEEEeqnarray*}{rCl}  
	\frac{\partial T_{\mathsf{Hcc,C}}}{\partial \beta}&=&\frac{-t_{2}\left(1+K_{2}\right)}{\left(1+\frac{\beta}{\alpha} t_{2}\right)^{2}}+\frac{K_{1} t_{2}\left(1+K_{1} K_{2}\right)}{\left(1+ \frac{K_{1}(1-\beta)}{\left(1-\alpha\right)} t_{2}\right)^{2}}=0,\nonumber\\
	 \stackrel{K_2\gg 1}{\Longrightarrow}\beta &=& \alpha+\frac{\alpha(1-\alpha)}{t_{2}}\left(\frac{1}{K_{1}}-1\right),\nonumber\\
 \stackrel{t_2\gg1}{\Longrightarrow} \beta &\approx &\alpha.
\end{IEEEeqnarray*}
Now we drive the optimal $\alpha$ when $\beta = \alpha$.  Due to the existence of functions $\min\{{M_1}/{(\alpha N)},0\}$ and $[\cdot]^+$, $\frac{\partial T_{\mathsf{Hcc,C}}}{\partial \alpha}$   can not be computed directly. We thus  consider the problem in two cases: $\alpha<  M_1/{N}$ and $\alpha \geq M_1/{N}$. \\
\emph{1) Case} $\alpha < {M_{1}}/{N}$: 
\begin{IEEEeqnarray}{rCl}
	T_{\mathsf{Hcc,C}}&=&\frac{\alpha}{1+t_{2}} \left(K_{2}-t_{2}\right)+(1-\alpha) \frac{K_{1} K_{2}+K_{1} t_{2}}{1+K_{1}t_{2}}, \nonumber \\
	\frac{\partial T_{\mathsf{Hcc,C}}}{\partial \alpha}&=&\frac{K_{2}-t_{2}}{1+t_{2}}-\frac{K_{2}+t_{2}}{\frac{1}{K_{1}}+t_{2}}<0.
\end{IEEEeqnarray}
Thus, in this case  $T_{\mathsf{Hcc,C}}$ is monotonically decreasing  and the  optimal $\alpha$ is ${M_{1}}/{N}$. \\
\emph{2) Case} $\alpha \geq {M_{1}}/{N}$: 
\begin{IEEEeqnarray}{rCl}
T_{\mathsf{Hcc,C}}& =& \frac{\alpha  K_{2}(K_1\!-\! t_{1})}{1\!+\!{t_{1}}/{\alpha}}+\alpha \frac{K_{2}\!-\!t_{2}}{1+t_{2}}\!+\!(1\!-\!\alpha) \frac{K_{1} (K_{2}- t_{2})}{1+K_{1} t_{2}}, \nonumber \quad \quad\\
\frac{\partial T_{\mathsf{Hcc,C}}}{\partial \alpha}&= &0\quad {\Longrightarrow}\nonumber\\
\alpha&=& t_{1}\frac{-  1+  \sqrt{1+\left(1-\frac{\Delta_1}{K_1K_2} \right)\left(\frac{1}{K_{2}}+\frac{\Delta_1}{K_1K_{2}}\right)}}{ \left(1-\frac{\Delta_1}{K_1K_2}\right)}\leq \frac{M_{1}}{N}\nonumber.
\end{IEEEeqnarray}
where $\Delta_1 \triangleq \frac{\left(K_{2}-t_{2}\right)\left(K_{1}-1\right)}{\left(1+K_{1} t_{2}\right)\left(1+t_{2}\right)}$. Thus, in this case    the  optimal $\alpha$ is ${M_{1}}/{N}$.

 From case 1 and 2, we conclude that $\beta = \alpha={M_{1}}/{N}$ is an approximately optimal choice.

\subsection{Choice of $(\alpha,\beta)$ for $T_\mathsf{Pro,C}$   
}
For easy of computation, we choose $\alpha = \beta$. Due to the existence of functions $\min\{{M_1}/{\alpha N},0\}$ and $[\cdot]^+$,   $\frac{\partial T_{\mathsf{Pro,C}}}{\partial \alpha}$ can not be computed directly. We thus   consider the problem  two cases: $\alpha<  M_1/{N}$ and $\alpha\geq M_1/{N}$. 
  \\
\emph{1) Case} $\alpha < {M_{1}}/{N}$: 
\begin{IEEEeqnarray*}{rCl}
T_{\mathsf{Pro,C}}& =& \frac{M_{1}t_{2}\left(\frac{1}{M_{2}}-1\right)}{1+t_{2}}+\frac{(1-\alpha)\left(K_{1} K_{2}+K_{1} t_{2}\right)}{1+K_{1} t_{2}},\\ 
\frac{\partial T_{\mathsf{Pro,C}}}{\partial \alpha}&= & -\frac{\left(K_{1} K_{2}+K_{1} t_{2}\right)}{1+K_{1} t_{2}}<0.
\end{IEEEeqnarray*}
Thus, the  optimal $\alpha$ can't be small than ${M_{1}}/{N}$. \\
\emph{2) Case} $\alpha \geq {M_{1}}/{N}$: 
\begin{IEEEeqnarray}{rCl}
T_{\mathsf{Pro,C}}  
		&=&\frac{\alpha}{1+t_{2}}\left[\frac{\left(K_{1}-\frac{t_{1}}{\alpha}\right)\left(K_{2}-t_{2}\right)}{1+\frac{t_{1}}{\alpha}}
		+\frac{M_{1}}{\alpha N}\left(K_{2}-t_{2}\right)\right] \nonumber \\
		&&+(1-\alpha) \frac{K_{1} K_{2}-K_{1} t_{2}}{1+K_{1} t_{2}},\nonumber\\
\frac{\partial T_{\mathsf{Pro,C}}}{\partial \alpha}&= &  0\quad {\Longrightarrow}\nonumber\\
 		\alpha&=&t_{1} \left(-1+\sqrt{1+(1+\Delta_2)/(K_{1}-\Delta_2)}\right),\nonumber\\
		\alpha &\approx& \gamma\triangleq t_{1} \frac{1+\Delta_2}{2\left(K_{1}-\Delta_2\right)}\geq \frac{M_1}{N} .
\end{IEEEeqnarray}
where $\Delta_2  \triangleq \frac{K_{1}\left(1+t_{2}\right)}{1+K_{1} t_{2}}\geq 1$. 
From case 1 and  2, and since $\alpha\in[0,1]$, we have 
\begin{IEEEeqnarray}{rCl} 
	\alpha^* = \beta^*  =\min\{\gamma,1\}.
\end{IEEEeqnarray}

\section{Proof of Theorem \ref{UselessM1}}\label{proof_TheoremUselssCen}

First consider the setup using centralized caching placement. By choosing $\alpha=\beta=1$, the achievable upper bounds  $T _\mathsf{Pro,C}$ in Theorem \ref{delay of central} reduces to
\begin{IEEEeqnarray}{rCl}
T_\mathsf{Pro,C}&=& \frac{M_{1}}{N}r_\mathsf{c}\left(\frac{M_{2}}{N},K_{2}\right)  +r_\mathsf{c}\left(\frac{M_{1}}{N}, K_{1}\right) r_\mathsf{c}\left(\frac{M_{2}}{N}, K_{2}\right).\nonumber \\
&=& r_\mathsf{c}\left(\frac{M_{2}}{N}, K_{2}\right) \left(\frac{M_{1}}{N}+r_\mathsf{c}\left(\frac{M_{1}}{N}, K_{1}\right)  \right).
\end{IEEEeqnarray}
By letting  
\[ \frac{M_{1}}{N}+r_\mathsf{c}\left(\frac{M_{1}}{N}, K_{1}\right)   =1,\]
we have $M_1=(K_1-1)N/K_1$. Thus, if $M_1$ is equal to the threshold 
$(K_1-1)N/K_1$, we can achieve $T_\mathsf{Pro,C}=r_\mathsf{c}\left(\frac{M_{2}}{N}, K_{2}\right)$, which is the same transmission delay as if each relay had the full library, i.e., $M_1=N$.

Now consider two-relay case $K_1=2$ using the decentralized caching placement. By choosing  $\alpha=\beta=1$, then $T_\mathsf{Pro,D}=R_\mathsf{s1,D}=R_\mathsf{p1,D}+R_\mathsf{p2,D}+\mathsf{max}\{R_\mathsf{p3,D}-R_\mathsf{e,D}, 0\}$.  If $R_\mathsf{p3,D}\geq R_\mathsf{e,D}$, i.e., 
\begin{equation}\label{eq:condition1}
\frac{M_1}{N}\ge \Big(1-\frac{M_1}{N}\Big)^2\Big(1-\frac{M_2}{N}\Big)^2,
\end{equation}
we have $T_\mathsf{Pro,D}= r_\mathsf{d}\left(\frac{M_{2}}{N}, K_{2}\right)$. Notice that condition $M_1\geq 0.38N$ ensure  \eqref{eq:condition1} to be satisfied, thus if $M_1= 0.38N$, we can achieve the same transmission delay as if each relay had the full library, i.e., $M_1=N$.



\section{Proof of Theorem \ref{theorem_compare}}\label{prTheo1}
\subsection{Proof of \eqref{eq_GapHCC}}


Although the optimal choice of our $T_\mathsf{Pro,D}$ could be different, we apply the same choice as HCC-III in \eqref{eqHCCab}, and compare $T_{\mathsf{Hcc,D}}^{\mathsf{Original}}$ with $T_{\mathsf{Pro,D}}$.
\subsubsection{Regime \uppercase\expandafter{\romannumeral 1}}
$M_1+M_2K_2\ge N$ and $0\le M_1\le N/4$. In this regime, scheme HCC-III chooses $(\alpha,\beta)=(M_1/N,M_1/N)$. we have
\begin{align*}
T_{\mathsf{Hcc,D}}^{\mathsf{Original}}&=r_\mathsf{d}\Big(\frac{M_2}{N},K_2\Big)+\Big(1-\frac{M_1}{N}\Big) r_\mathsf{d}\Big(\frac{M_2}{N},K_1K_2\Big),\\
T_\mathsf{Pro,D}&\leq\frac{M_1}{N} r_\mathsf{d}\Big(\frac{M_2}{N},K_2\Big)+\Big(1-\frac{M_1}{N}\Big) r_\mathsf{d}\Big(\frac{M_2}{N},K_1K_2\Big),
\end{align*}
and $T_{\mathsf{Hcc,D}}^{\mathsf{Original}} - T_\mathsf{Pro,D}\geq\Big(1-\frac{M_1}{N}\Big) r_\mathsf{d}\Big(\frac{M_2}{N},K_2\Big).$

\subsubsection{Regime \uppercase\expandafter{\romannumeral 2}}
$M_1+M_2K_2<N$. In this regime, scheme HCC-III chooses $(\alpha,\beta)=(M_1/(M_1+M_2K_2),0)$, we have
\begin{align*}
T_{\mathsf{Hcc,D}}^{\mathsf{Original}}=&\frac{M_1K_2}{M_1+M_2K_2} r_\mathsf{d}\Big(\frac{M_1+M_2K_2}{N},K_1\Big)\nonumber\\&+\frac{M_2K_2}{M_1+M_2K_2} r_\mathsf{d}\Big(\frac{M_1+M_2K_2}{NK_2},K_1K_2\Big)\nonumber\\&+\frac{M_2K_2}{M_1+M_2K_2} r_\mathsf{d}\Big(\frac{M_1+M_2K_2}{NK_2},K_2\Big)\nonumber\\&+\frac{M_1K_2}{M_1+M_2K_2},\\
T_\mathsf{Pro,D}\leq&\frac{M_1K_2}{M_1+M_2K_2} r_\mathsf{d}\Big(\frac{M_1+M_2K_2}{N},K_1\Big)+\frac{M_1K_2}{N}\nonumber\\&+\frac{M_2K_2}{M_1+M_2K_2} r_\mathsf{d}\Big(\frac{M_1+M_2K_2}{NK_2},K_1K_2\Big)\nonumber\\&-\frac{(K_1-1)M_1K_2}{M_1+M_2K_2} \Big(1-\frac{M_1+M_2K_2}{N}\Big)^{K_1},
\end{align*}
and
\begin{equation*}
\begin{aligned}
T_{\mathsf{Hcc,D}}^{\mathsf{Original}} - T_\mathsf{Pro,D}\geq &\frac{M_2K_2}{M_1+M_2K_2} r_\mathsf{d}\Big(\frac{M_1+M_2K_2}{NK_2},K_2\Big)\\&+ \frac{(K_1\!-\!1)M_1K_2}{M_1\!+\!M_2K_2} \Big(1\!-\!\frac{M_1\!+\!M_2K_2}{N}\Big)^{K_1}\\&+\Big(\frac{M_1K_2}{M_1+M_2K_2}-\frac{M_1K_2}{N}\Big)\\\overset{(a)}{\ge}&\frac{M_2K_2}{M_1+M_2K_2} r_\mathsf{d}\Big(\frac{M_1+M_2K_2}{NK_2},K_2\Big)
\end{aligned}
\end{equation*}•
where (a) follows from $M_1+M_2K_2<N$.

\subsubsection{Regime \uppercase\expandafter{\romannumeral 3}}
$M_1+M_2K_2\ge N$ and $N/4<M_1\le N$. In this regime, scheme HCC-III chooses $(\alpha,\beta)=(M_1/N,1/4)$, we have
\begin{align*}
T_{\mathsf{Hcc,D}}^{\mathsf{Original}} =&\Big(1-\frac{M_1}{N}\Big)  r_\mathsf{d}\Big(\frac{3M_2}{4(N-M_1)},K_1K_2\Big)\nonumber\\&+\Big(1-\frac{M_1}{N}\Big)  r_\mathsf{d}\Big(\frac{3M_2}{4(N-M_1)},K_2\Big)\nonumber\\&+\frac{M_1}{N}  r_\mathsf{d}\Big(\frac{M_2}{4M_1},K_2\Big),\\
T_\mathsf{Pro,D}\leq&\Big(1-\frac{M_1}{N}\Big)  r_\mathsf{d}\Big(\frac{3M_2}{4(N-M_1)},K_1K_2\Big)\nonumber\\&+\frac{M_1}{N}  r_\mathsf{d}\Big(\frac{M_2}{4M_1},K_2\Big),
\end{align*}
and $T_{\mathsf{Hcc,D}}^{\mathsf{Original}} - T_\mathsf{Pro,D}\geq \Big(1-\frac{M_1}{N}\Big) r_\mathsf{d}\Big(\frac{3M_2}{4(N-M_1)},K_2\Big).$

\subsection{Proof of $T_{\mathsf{Hcc}, \mathsf{C}}\geq T_{\mathsf{Pro}, \mathsf{C}}$}
From (\ref{modifed_HCC}) and (\ref{eqDelayCentral}), rewrite $T_{\mathsf{Hcc}, \mathsf{C}}$ and $T_{\mathsf{Pro}, \mathsf{C}}$, 
\begin{IEEEeqnarray*}{rCl}
T_{\mathsf{Hcc}, \mathsf{C}} &=& \underbrace{\alpha K_{2} r_{\mathsf{c}}\left(\frac{M_{1}}{\alpha N}, K_{2}\right)}_{\mathsf{A_1}}+\underbrace{\alpha r_{\mathsf{c}}\left(\frac{\beta M_{2}}{\alpha N}, K_{1}\right)}_\mathsf{B_1} \nonumber \\
&&\quad+\underbrace{(1-\alpha) r_{\mathsf{c}}\left(\frac{(1-\beta) M_{2}}{(1-\alpha) N}, K_{1} K_{2}\right)}_\mathsf{C_1} ,\nonumber\\
T_{\mathsf{Pro}, \mathsf{C}} &=& \frac{\mathsf{A_1}}{K_2}r_{\mathsf{c}}\left(\frac{\beta M_{2}}{\alpha N}, K_{2}\right) + \min \left\{\frac{M_{1}}{\alpha N}, 1\right\} \cdot \mathsf{B_1}+\mathsf{C_1}\nonumber\\
&\leq & \mathsf{A_1}+\mathsf{B_1}+\mathsf{C_1}=T_\mathsf{Hcc,C}.
\end{IEEEeqnarray*}
where the last equality holds by ${K_2}\geq r_{\mathsf{c}}\left({\beta M_{2}}/{(\alpha N)}, K_{2}\right)$.

\subsection{Proof of $T_{\mathsf{Hcc}, \mathsf{D}}\geq T_{\mathsf{Pro}, \mathsf{D}}$}
Rewrite $T_{\mathsf{Hcc}, \mathsf{D}}$ and $T_{\mathsf{Pro}, \mathsf{D}}$ as in  (\ref{modifed_HCC}) and (\ref{eq_proposeRate}), 
\begin{IEEEeqnarray}{rCl}\label{rateHCCDAppen}
T_{\mathsf{Hcc}, \mathsf{D}} &=& \underbrace{\alpha K_{2} r_{\mathsf{d}}\left(\frac{M_{1}}{\alpha N}, K_{1}\right)}_{\mathsf{A_2}}+\underbrace{\alpha r_{\mathsf{d}}\left(\frac{\beta M_{2}}{\alpha N}, K_{2}\right)}_\mathsf{B_2} \nonumber \\
&&\quad +\underbrace{(1-\alpha) r_{\mathsf{d}}\left(\frac{(1-\beta) M_{2}}{(1-\alpha) N}, K_{1} K_{2}\right)}_\mathsf{C_2} ,
\\
T_{\mathsf{Pro}, \mathsf{D}} &=& \alpha(R_\mathsf{p1,D}+R_\mathsf{p2,D}+\mathsf{max}\{R_\mathsf{p3,D}-R_\mathsf{e,D}, 0\})+\mathsf{C_2} 
\nonumber\\
&\leq & \alpha(R_\mathsf{p1,D}+R_\mathsf{p2,D}+ R_\mathsf{p3,D}-R_\mathsf{e,D})+\mathsf{C_2} 
\nonumber\\ 
&\leq & \alpha(R_\mathsf{p1,D}+R_\mathsf{p2,D}+ R_\mathsf{p3,D})+\mathsf{C_2}
\nonumber\\ 
&=&   \min \left\{\frac{M_{1}}{\alpha N}, 1\right\} \cdot \mathsf{B_2}+\mathsf{C_2}+\alpha r_{\mathsf{d}}\left(\frac{\beta M_{2}}{\alpha N}, K_{2}\right)\nonumber\\
&&\quad \cdot  \left[r_{\mathsf{d}}\Big(\frac{M_1}{\alpha N},K_1\Big)\!-\!K_1\Big(1\!-\!\frac{M_1}{ \alpha N}\Big)^{\!K_1}\right]
\nonumber\\
&&\quad +\alpha\Big(1-\frac{M_1}{\alpha N}\Big)^{K_1} r_{\mathsf{d}}\Big(\frac{\beta M_2}{\alpha N},K_1K_2\Big),
\nonumber\\ 
&  \stackrel{(a)}\leq & \alpha r_{\mathsf{d}}\Big(\frac{M_1}{\alpha N},K_1\Big)r_{\mathsf{d}}\left(\frac{\beta M_{2}}{\alpha N}, K_{2}\right)+\mathsf{B_2}+\mathsf{C_2}
\nonumber\\
& {\leq} & \mathsf{A_2}+ \mathsf{B_2} +\mathsf{C_2}=T_\mathsf{Hcc,D}.\nonumber
\end{IEEEeqnarray}
where (a) holds by $K_1r_{\mathsf{d}}\big(\frac{\beta M_2}{\alpha N},K_2\big) \geq r_{\mathsf{d}}\big(\frac{\beta M_2}{\alpha N},K_1K_2\big)$ and the last equality holds  by ${K_2}\geq r_{\mathsf{d}}\left({\beta M_{2}}/{(\alpha N)}, K_{2}\right)$.

\subsection{Proof of  $T_\mathsf{Hcc,C} \leq c_{1} T^*,$ and $T_{\mathsf{Hcc}, \mathsf{D}}\leq  c_2T^*$}
We first prove $T_\mathsf{Hcc,C} \leq c_{1} T^*,$ and $T_{\mathsf{Hcc}, \mathsf{D}}\leq  c_2T^*$ and then showed that $T_\mathsf{Pro,C}$ is within a smaller constant multiplicative gap than HCC scheme in \cite{Karamchandani'16}.

From \eqref{eq1_lowerbound},  rewrite the lower bound, for all $M_1,M_2\in[0,N]$, $s_1\in\{1,\ldots,K_1\}$ and $s, s_2\in\{1,\ldots,K_2\}$,
\begin{IEEEeqnarray}{rCl} \label{eq_lower2}
T^*\geq \max \{T_1^*,T_2^*\}
\end{IEEEeqnarray}
where 
\begin{IEEEeqnarray*}{rCl}
T^*_1&\triangleq&s_1s_2 -\frac{s_1M_1+s_1s_2M_2}{\lfloor N/(s_1s_2) \rfloor},\\
T_2^*&\triangleq&s-\frac{sM_2}{\lfloor N/s \rfloor}.
\end{IEEEeqnarray*}
In  \cite{Karamchandani'16} the authors consider the following three regimes: 
\begin{itemize}
\item regime I: $M_1+M_2K_2\ge N , 0\le M_1\le \frac{N}{4}$;
\item regime II: $M_1+M_2K_2<N$;
\item regime III: $M_1+M_2K_2\ge N,\frac{N}{4}\!<\!M_1\le N$,
\end{itemize}
and show that 
  $\mathsf{A}_2,\mathsf{B}_2,\mathsf{C}_2$ defined in \eqref{rateHCCDAppen} satisfies
\begin{IEEEeqnarray*}{rCl}
\mathsf{B}_2&\leq & c' T_2^*,\\
\mathsf{A}_2+\mathsf{C}_2&\leq&  c^{''} T_1^*,
\end{IEEEeqnarray*}
for some finite positive constants  $c'$ and $c^{''}$. With this result, it's easy to show that there must exist a finite positive constant $c_1$ such that \[T_\mathsf{Hcc,D}=\mathsf{A}_2+\mathsf{B}_2+\mathsf{C}_2\leq c_1T^*.\] Since $T_\mathsf{Pro,C}\leq T_\mathsf{Hcc,C}\leq T_\mathsf{Hcc,D}$, $T_\mathsf{Pro,D}\leq T_\mathsf{Hcc,D}$, we directly have
\begin{IEEEeqnarray}{rCl}
 T_\mathsf{Pro,C}\leq T_\mathsf{Hcc,C} \leq c_{1}\cdot T^*, \label{eqCompareC}\\  
   T_\mathsf{Pro,D}\leq T_\mathsf{Hcc,D} \leq c_{2}\cdot T^* \label{eqCompareD},
\end{IEEEeqnarray}
for some finite positive constants  $c_1$ and $c_2$.

In \cite{Karamchandani'16}, the authors showed that in regime  $M_1+M_2K_2\ge N$, when using  their HCC scheme, the rate of the first layer has a constant multiplicative gap of 35 within $T^*$. Now we show that our upper bound is within a smaller gap within $T^*$  in this regime. 

For  regime $M_1+M_2K_2\ge N$, choose $\alpha=\beta=1$, then from \eqref{eqDelayCentral} we have
\begin{IEEEeqnarray*}{rCl}
T_\mathsf{Pro,C} &=& r_\mathsf{c}\left(\frac{M_{2}}{ N},  K_{2}\right) \left(
\frac{M_1}{N}+\frac{1-M_1/N}{{1}/{K_1}+K_2M_2/N}\right)\nonumber\\
&\stackrel{(a)}{\leq} & 12\cdot T_2^*\cdot \left(\frac{M_1}{N}+\frac{K_2M_2/N}{{1}/{K_1}+K_2M_2/N}\right)\nonumber\\
&\leq& 12\cdot T_2^* \left(\frac{M_1}{N}+1 \right)\nonumber\\
&\leq & 24 \cdot T_2^*\leq 24\cdot T^*
\end{IEEEeqnarray*}
where (a) holds by condition $M_1+M_2K_2\ge N$ and  by  $r_\mathsf{c}\left({M_{2}}/{ N},  K_{2}\right)\leq 12\cdot T_2^*$, see proof in \cite{Centralized}.


\end{document}